\newcommand{\out}{\mathtt{OUT}}
\definecolor{codegreen}{rgb}{0,0.6,0}
\definecolor{codegray}{rgb}{0.5,0.5,0.5}
\definecolor{codepurple}{HTML}{C42043}
\definecolor{backcolour}{HTML}{F2F2F2}
\definecolor{bookColor}{cmyk}{0,0,0,0.90}  
\lstdefinestyle{mystyle}{
    backgroundcolor=\color{backcolour},   
    commentstyle=\color{codegreen},
    keywordstyle=\color{codepurple},
    numberstyle=\numberstyle,
    stringstyle=\color{codepurple},
    basicstyle=\footnotesize\ttfamily
}
\newcommand{\shaleen}[1]{{\color{magenta} Shaleen: [{#1}]}}
\newcommand{\cut}[1]{}   
\newcommand{\mytag}[2]{%
	\text{#1}%
	\@bsphack
	\protected@write\@auxout{}%
	{\string\newlabel{#2}{{#1}{\thepage}}}%
	\@esphack
}
\newcommand{\introparagraph}[1]{\noindent {\bf \em #1.}}  
\newtheorem{observation}{Observation}
\providecommand{\bx}[0]{\mathbf{x}}
\providecommand{\bv}[0]{\mathbf{v}}
\providecommand{\mD}[0]{\mathcal{D}}
\providecommand{\mH}[0]{\mathcal{H}}
\providecommand{\mF}[0]{\mathcal{F}}
\providecommand{\mJ}[0]{\mathcal{J}}
\newcommand{\faq}{\mathsf{FAQ}}
\newcommand{\cq}{\mathsf{CQ}}
\newcommand{\bS}{\sigma}
\newcommand{\zerobf}{\mathbf{0}}
\newcommand{\onebf}{\mathbf{1}}
\providecommand{\mS}[0]{\mathcal{S}}
\providecommand{\mL}[0]{\mathcal{L}}
\providecommand{\mI}[0]{\mathcal{I}}
\providecommand{\tOUT}[0]{\texttt{OUT}}
\providecommand{\fhw}[0]{\texttt{fhw}}
\providecommand{\edges}[0]{\mathcal{E}}
\providecommand{\nodes}[0]{\mathcal{V}}
\providecommand{\domain}[0]{\mathbf{dom}}
\providecommand{\tree}[0]{\mathcal{T}}
\providecommand{\fhw}[0]{\mathsf{fhw}}
\providecommand{\subw}[0]{\mathsf{subw}}
\newcommand{\defeq}{\stackrel{\mathrm{def}}{=}}
\newcommand{\cd}{\ \leftarrow \ }
\providecommand{\mw}[0]{\mathsf{w}}
\providecommand{\mF}[0]{\mathcal{F}}
\providecommand{\mH}[0]{\mathcal{H}}
\providecommand{\mV}[0]{\mathcal{V}}
\providecommand{\mE}[0]{\mathcal{E}}
\providecommand{\eat}[1]{}
\newcommand{\sfreew}{\mathsf{freew}}
\providecommand{\wout}[0]{\texttt{pw}}
\newcommand{\revone}[1]{ #1}
\definecolor{reviewer2-color}{rgb}{0.8,0.0, 0.9}
\newcommand{\revtwo}[1]{ #1}
\newcommand{\revthree}[1]{ #1}
\keywords{Yannakakis, Projections, Output-sensitive, Conjunctive Queries}
\begin{document}

\title{Output-sensitive Conjunctive Query Evaluation}
\author{Shaleen Deep}
\affiliation{%
  \institution{Microsoft Jim Gray Systems Lab}
  \country{USA}}
\email{shaleen.deep@microsoft.com}

\author{Hangdong Zhao}
\affiliation{%
  \department{Department of Computer Sciences}
  \institution{University of Wisconsin-Madison}
  \city{Madison}
  \country{USA}}
\email{hangdong@cs.wisc.edu}

\author{Austen Z. Fan}
\affiliation{%
  \department{Department of Computer Sciences}
  \institution{University of Wisconsin-Madison}
  \city{Madison}
  \country{USA}}
\email{afan@cs.wisc.edu}

\author{Paraschos Koutris}
\affiliation{%
  \department{Department of Computer Sciences}
  \institution{University of Wisconsin-Madison}
  \city{Madison}
  \country{USA}}
\email{paris@cs.wisc.edu}
\begin{abstract}
  Join evaluation is one of the most fundamental operations performed by database systems and arguably the most well-studied problem in the Database community. A staggering number of join algorithms have been developed, and commercial database engines use finely tuned join heuristics that take into account many factors including the selectivity of predicates, memory, IO, etc. However, most of the results have catered to either full join queries or non-full join queries but with degree constraints (such as PK-FK relationships) that makes join evaluation easier. Further, most of the algorithms are also not output-sensitive. 
  In this paper, we present a novel, output-sensitive algorithm for the evaluation of acyclic Conjunctive Queries (CQs) that contain arbitrary free variables. Our result is based on a novel generalization of the  Yannakakis algorithm and shows that it is possible to improve the running time guarantee of Yannakakis algorithm by a polynomial factor. Importantly, our algorithmic improvement does not depend on the use of fast matrix multiplication, as a recently proposed algorithm does. The application of our algorithm recovers known prior results and improves on known state-of-the-art results for common queries such as paths and stars. The upper bound is complemented with a matching lower bound \revthree{for star queries, a restricted subclass of acyclic CQs, and a family of cyclic CQs} conditioned on two variants of the $k$-clique conjecture. 
\end{abstract}
\maketitle
\section{Introduction}
Join query evaluation is the workhorse of commercial database systems supporting complex data analytics, data science, and machine learning tasks, to name a few. Decades of research in the database community (both theoretical and practical) have sought to improve the join query evaluation performance to speed up the processing over large datasets.

\looseness=-1
In terms of the theoretical guarantees offered by the state-of-the-art algorithms, several existing results are known for acyclic and cyclic queries. For acyclic queries, Yannakakis proposed an elegant framework for query evaluation~\cite{yannakakis1981algorithms}. Yannakakis showed that for a database $\mD$ of size $|\mD|$ and any acyclic Conjunctive Query (CQ) $Q(\bx_\mF)$ with output $\tOUT = Q(\mD)$ and free variables $\bx_\mF$, we can evaluate the query result in time $O(|\mD|+|\mD| \cdot |\tOUT|)$ in terms of data complexity. Depending on the structure of free variables in the query, it is possible to obtain a better runtime bound for the algorithm. For example, for the class of {\em free-connex acyclic} CQs, the algorithm is optimal with running time $O(|\mD| + |\tOUT|)$. Beyond acyclic queries, worst-case optimal joins (WCOJs) provide worst-case guarantees on query evaluation time for any full CQ (i.e., all variables are free). For non-full CQs, WCOJs are combined with the notion of tree decompositions to obtain tighter guarantees on the running time. In particular,~\cite{olteanu2015size} showed that any CQ can be evaluated in $O(|\mD|^\mw + |\tOUT|)$ time, where $\mw$ generalizes the fractional hypertree width from Boolean to CQs with arbitrary free variables. Abo Khamis et al.~\cite{abo2017shannon} showed that using the \textsf{PANDA} algorithm and combining it with the Yannakakis framework, any CQ can be evaluated in $\tilde{O}(|\mD|+|\mD|^\subw\ \cdot |\out|)$ time\footnote{$\tilde{O}$ notation hides a polylogarithmic factor in terms of $|\mD|$.}. Notably, the \textsf{PANDA} algorithm itself uses the Yannakakis algorithm as the final step for join evaluation once the cyclic query has been transformed into a set of acyclic queries. \revone{Berkholz and Schweikardt~\cite{berkholz2020constant} proposed an elegant width measure known as the \emph{free-connex submodular} width that allows additional enumeration related guarantees and thus, can evaluate certain classes of CQs \looseness=-1 efficiently.}

Despite its fundamental importance, no known improvements have been made to Yannakakis result until very recently. In an elegant result, Hu~\cite{hu2024fast} showed that using fast matrix multiplication (i.e., a matrix multiplication algorithm that multiplies two $n \times n$ matrices in $O(n^\omega)$ time, $\omega < 3$), we can evaluate any acyclic CQ in $O(|\mD|+ |\tOUT| + |\mD| \cdot |\tOUT|^{5/6})$ (if $\omega=2$, which gives the strongest result in their framework). The key idea is to use the star query as a fundamental primitive that benefits from fast matrix multiplication, an insight also explored by prior work~\cite{amossen2009faster, deep2020}, and evaluate any acyclic CQ in a bottom-up fashion by repeatedly applying the star query primitive. However, Hu's result has two principal limitations. First, \revtwo{the algorithm is non-combinatorial in nature, an undesirable property from a practical standpoint}. Second, the algorithm does not support general aggregations (aka FAQs cf.~\autoref{sec:agg}), a common usecase in SQL~\cite{date1989guide}. It is also open whether the upper bound obtained is optimal. Therefore, the question of whether Hu's upper bound can be improved and whether the improvement can be made using a combinatorial\footnote{Although combinatorial algorithm does not have a formal definition, it is intuitively used to mean that the algorithm does not use any algebraic structure properties. A key property of combinatorial algorithms is that they are practically efficient~\cite{williams2018subcubic}. Nearly all practical/production join algorithms known to date are combinatorial in nature.} algorithm \looseness=-1 remains open.

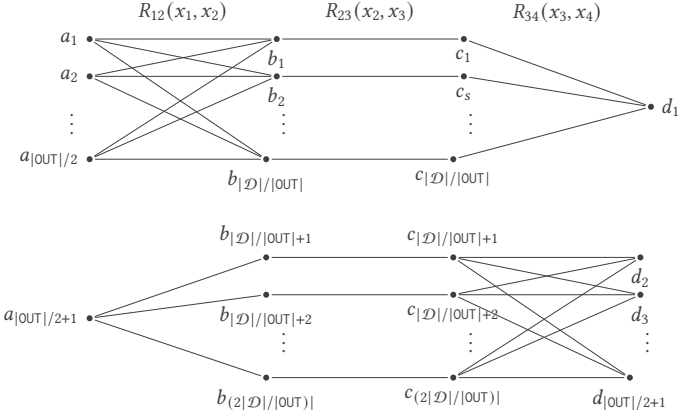
\begin{figure*}
    \begin{subfigure}{0.75\textwidth}
    \scalebox{0.8}{
        \def\x{1}
        \begin{tikzpicture}[
            mydot/.style={
              circle,
              fill,
              inner sep=1pt
            },
            >=latex,
            shorten >= 1pt,
            shorten <= 1pt
            ]
            \node[mydot,label={left:$a_1$},xshift=-2 cm] (a1) {}; 
            \node[mydot,below=of a1,yshift=0.5cm,label={left:$a_2$}] (a2) {}; 
            \node[below=of a2,yshift=0.5cm,label={left:$\vdots$}] (a3) {}; 
            \node[mydot,below=of a3,yshift=0.5cm,label={left:$a_{|\tOUT|/2}$}] (a4) {}; 
            
            \node[mydot,right=3 cm of a1,label={below:$b_1$}] (b1) {}; 
            \node[mydot,below=of b1,yshift=0.5cm,label={below:$b_2$}] (b2) {}; 
            \node[below=of b2,yshift=0.5cm,xshift=-10pt,label={right:$\vdots$}] (b3) {}; 
            \node[mydot,below=of b2,yshift=-0.25cm,label={below:$b_{|\mD|/{|\tOUT|}}$}] (b4) {};

            \node[mydot,right=2cm of b1,label={below:$c_1$},xshift=\x cm] (c1) {}; 
            \node[mydot,below=of c1,yshift=0.5cm,label={below:$c_2$}] (c2) {}; 
            \node[below=of c2,yshift=0.5cm] (cx) {}; 
            \node[below=of c2,yshift=0.5cm,xshift=-10pt,label={right:$\vdots$}] (c3) {}; 
            \node[mydot,below=of c2,yshift=-0.25cm,label={below:$\revone{c_{|\mD|/{2}}}$}] (c4) {};

            \node[mydot,right=2cm of c2,yshift=-0.5cm,label={right:$d_1$},xshift=\x cm] (d1) {}; 
            
            \path[] (a1) edge node[above, yshift=5pt]{$R^\mD_{12}(x_1, x_2)$} (b1)
              edge (b2)
              edge (b4);
            \path[] (a2) edge (b1)
              edge (b2)
              edge (b4);
            \path[] (a4) edge (b1)
              edge (b2)
              edge (b4);

            \path[] (b1) edge node[above, yshift=5pt]{$R^\mD_{23}(x_2, x_3)$} (c1);
            \path[] (b2) edge (c2);
            \path[] (b1) edge (c2);
            \path[] (b2) edge (cx);
            \path[] (b4) edge (c4); 

            \path[] (c1) edge node[above, yshift=20pt]{$R^\mD_{34}(x_3, x_4)$} (d1);
            \path[] (c2) edge (d1);
            \path[] (c4) edge (d1);

            \node[mydot,below=of a4,yshift=-1.5cm,label={left:$a_{|\tOUT|/2 + 1}$}] (a5) {};

            \node[mydot,below=of b4,yshift=-0.5cm,label={above:$b'_{1}$}] (b5) {}; 
            \node[mydot,below=of b5,yshift=0.5cm,label={below:$b'_{2}$}] (b6) {}; 
            \node[below=of b6,yshift=0.5cm,xshift=7pt,label={left:$\vdots$}] (b7) {}; 
            \node[below=of b6, yshift=0.5cm,] (bdots) {}; 
            \node[mydot,below=of b6,yshift=-0.25cm,label={below:$b'_{|\mD|/2}$}] (b8) {};

            \node[mydot,below=of c4,yshift=-0.5cm,label={above:$\revone{c'_1}$}] (c5) {}; 
            \node[mydot,below=of c5,yshift=0.5cm,label={below:$\revone{c'_2}$}] (c6) {}; 
            \node[below=of c6,yshift=0.5cm,xshift=-10pt,label={right:$\vdots$}] (c7) {}; 
            \node[mydot,below=of c6,yshift=-0.25cm,label={below:$\revone{c'_{|\mD|/|\tOUT|}}$}] (c8) {};

            \node[mydot,right=2cm of c5,label={below:$d'_1$},xshift=\x cm] (d5) {}; 
            \node[mydot,below=of d5,yshift=0.5cm,label={below:$d'_2$}] (d6) {}; 
            \node[below=of d6,yshift=0.5cm, xshift=-10pt,label={right:$\vdots$}] (d7) {}; 
            \node[mydot,below=of d6,yshift=-0.25cm,label={below:$d'_{|\tOUT|/2}$}] (d8) {};

            \path[] (a5) edge (b5);
            \path[] (a5) edge (b6);
            \path[] (a5) edge (b8); 

            \path[] (b5) edge (c5);
            \path[] (b6) edge (c5);
            \path[] (b6) edge (c6);
            \path[] (bdots) edge (c6);
            \path[] (b8) edge (c8); 

            \path[] (c5) edge (d5)
              edge (d6)
              edge (d8);

              \path[] (c6) edge (d5)
              edge (d6)
              edge (d8);

              \path[] (c8) edge (d5)
              edge (d6)
              edge (d8);
        \end{tikzpicture}}
    \end{subfigure}
    \caption{Database instance $\mD$ showing relational instances for relations $R_{12}(x_1, x_2), R_{23}(x_2, x_3),$ and $R_{34}(x_3, x_4)$ for the three path query.}
    \label{fig:threepath}
\end{figure*}

In this paper, we show that it is possible to evaluate any acyclic CQ $Q$ combinatorially in time $O(|\mD|+ |\tOUT|+|\mD| \cdot |\out|^{1-\epsilon})$ where $0 < \epsilon \leq 1$ is a query-dependent constant. An important class of queries that we consider is the path query $P_k(x_1, x_{k+1}) \leftarrow R_{12}(x_1, x_2) \wedge R_{23}(x_2, x_3)\wedge \dots \wedge R_{k, k+1}(x_k, x_{k+1})$. To give the reader an overview of our key ideas, we show our techniques on the  three path query $P_3(x_1, x_4)$.

\begin{example} \label{ex:paththree}
     Consider the database instance as shown in~\autoref{fig:threepath} for the $P_3(x_1, x_4)$ query. This example instance was also used by Hu to show the limitations of Yannakakis algorithm.
     \revone{For the relation instance $R^\mD_{23}$, each $b_i$ has a degree $|\tOUT|/2$ and is connected to $c_{1+(i-1) \cdot |\tOUT|/2 }, c_{2+(i-1) \cdot |\tOUT|/2}, \dots,$ $c_{i \cdot  |\tOUT|/2}$. Similarly, each $c'_i$ is connected to $b'_{1+(i-1) \cdot |\tOUT|/2}, b'_{2 + (i-1) \cdot |\tOUT|/2},$ $\dots, b'_{i \cdot  |\tOUT|/2}$ and has a degree $|\tOUT|/2$}. For any $1 \leq |\tOUT| \leq |\mD|$, Hu showed that any join order chosen by the Yannakakis algorithm for evaluating the query \revone{on the specific instance} must incur a materialization cost of $\Omega(|\mD| \cdot |\tOUT|)$. However, their argument assumes that the entire relation is used to do the join. We demonstrate an algorithm for this query that bypasses the assumption. First, partition $R^\mD_{12}(x_1, x_2)$ into $R_{12}^{\mD,H}(x_1, x_2)$ and $R_{12}^{\mD,L}(x_1, x_2)$ based on the degree threshold of $x_2$ values. In particular, \revone{for the instance under consideration, suppose we fix $\Delta = 2$}. We define:
$$ R_{12}^{\mD,L}(x_1, x_2) = \{ t \in R^\mD_{12} \mid |\sigma_{x_2 = t(x_2)} R^\mD_{12}| \leq \Delta\} $$
    
    Let $R_{12}^{\mD,H}(x_1, x_2) =  R^\mD \setminus R_{12}^{\mD, L}(x_1, x_2)$. Now, the original query can be answered by unioning the output of $P^H_3 = \pi_{x_1, x_4} (R_{12}^{\mD,H}(x_1, x_2) \wedge R^\mD_{23}(x_2, x_3) \wedge R^\mD_{34}(x_3, x_4))$ and $P^L_3 = \pi_{x_1, x_4} (R_{12}^{\mD,L}(x_1, x_2) \wedge R^\mD_{23}(x_2, x_3) \wedge R^\mD_{34}(x_3, x_4))$. Both of the join queries can be evaluated in $O(|\mD|)$ time, regardless of the value of $|\tOUT|$. {For both queries, we first apply a semijoin filter to remove all tuples from the input that do not participate in the join, a linear time operation}. $P^L_3$ is evaluated by first joining $R_{12}^{\mD,L}(x_1, x_2) \wedge R^\mD_{23}(x_2, x_3)$ \revone{using any standard join algorithm, and projecting the output on variables $x_1, x_3$. The materialized intermediate result is then joined with $R^\mD_{34}(x_3, x_4)$ and projected on $x_1, x_4$}. $P^H_3$ is evaluated in the opposite order by first joining $R^\mD_{23}(x_2, x_3) \wedge R^\mD_{34}(x_3, x_4)$ \revone{using any standard join algorithm, and projecting the output on variables $x_2, x_4$. The materialized intermediate result is then joined with } \revtwo{$R^{\mD,H}_{12}(x_1, x_2)$}\revone{, and finally projected on $x_1, x_4$}.
\end{example}

Building upon the idea in~\autoref{ex:paththree}, we show that by carefully partitioning the input and evaluating queries in a specific order, it is possible to improve the running complexity of join evaluation for a large class of CQs. 

\smallskip
\introparagraph{Our Contribution} In this paper, we develop a combinatorial algorithm for evaluating any CQ $Q$. We present a recursive algorithm that generalizes the Yannakakis algorithm and obtains a provable improvement in the running time over the Yannakakis algorithm. The generalization is clean enough for teaching at a graduate level. Our results are also easily extensible to support aggregations and commutative semirings. To characterize the running time of CQs, we also present a new simple width measure that we call the {\em projection width}, $\wout(Q)$, that closely relates to existing width measures known for acyclic queries. {As an example, our generalized algorithm when applied to the path query $P_k(x_1, x_{k+1})$ can evaluate it in time ${O}(|\mD|+ |\tOUT| + |\mD| \cdot |\tOUT|^{1-1/k})$. Rather surprisingly, this result (which is combinatorial and thus, assumes $\omega=3$) is already polynomially better than Hu's result for $k \leq 5$ even if we assume that $\omega=2$ to obtain the strongest possible result in their setting. At the heart of our main algorithm is a technical result that allows tighter bounding of the cost of materializing intermediate results when executing Yannakakis algorithm. We demonstrate the usefulness of our tighter bounding by proving that for path queries, we can further improve the running time to ${O}(|\mD|+ |\tOUT| + |\mD| \cdot |\tOUT|^{1-1/\lceil (k+1)/2 \rceil})$. For $k=3$, this result matches the lower bound of $\Omega(|\mD|+ |\tOUT| +|\mD| \cdot \sqrt{|\tOUT|})$ that holds for evaluating any $P_k$ for $k \geq 2$.
}
Further, we establish tightness of our results by demonstrating a matching lower bound on the running time for a \revthree{
restricted subclass} of acyclic and cyclic CQs. Our lower bounds are applicable to both join processing (aka Boolean semiring) and FAQs.

\eat{Worst case optimal joins (WCOJ) provide worst case guarantees on query evaluation time for any CQ. However, WCOJ are not output sensitive, i.e., the big-O complexity expression does not contain the output size as a parameter. Further, WCOJ do not provide any meaningful guarantees for queries that are not \emph{full}. For such queries, WCOJ are combined with the notion of tree decompositions to obtain tighter guarantees on the running time. In particular,~\cite{olteanu2015size} showed that using the idea of factorization, it is possible to evaluate any query in time $O(|D|^\fhw + |\out|)$ time. This result was later improved by~\cite{abo2017shannon} who showed that the running time can be further improved to $O(|D|^\subw\ + |\out|)$. These results have one principal limitation. First, the additive term only holds for full queries. If the query is not full, the $|\out|$ terms becomes multiplicative instead of additive. For non-full queries, both algorithms require an application of the celebrated Yannakakis algorithm. In~\cite{yannakakis1981algorithms}, Yannakakis showed that for any acyclic CQ $Q$ (not necessarily full) and database $D$, there exists an algorithm that can evaluate $Q(D)$ in time $O(|D| \cdot |\out|)$, where $\out$ denotes $Q(D)$. Remarkably, Yannakakis algorithm achieves optimal guarantees for the class of \emph{free-connex} queries where a tighter running time guarantee of $O(|D| + |\out|)$ can be achieved. However, for arbitrary acyclic CQs, it remains unknown if Yannakakis algorithm can be improved or not.

In this paper, we will show that it is indeed possible to evaluate every acyclic CQ $Q$ in time $O(|D| \cdot |\out|^{1-f(Q)})$ where $f(Q)$ is a constant smaller than one and depends only on the query $Q$. }
\section{Preliminaries and Notation}
\eat{
\hangdong{change into hypergraphs notations - i.e. the ones used in \textsf{PANDA} paper}

\hangdong{use $\chi(s) \cap \chi(t)$ for join variables}

\hangdong{use $P$ for head variables}

\hangdong{use $P_t$ for head variables present at the subtree rooted at $t$}}

\introparagraph{Conjunctive Queries} We associate a Conjunctive Query ($\cq$) $Q$ to a hypergraph
$\mH = (\mV,\mE)$, where $\mV = [n]=\{1,\ldots,n\}$ and $\mE$ is a set of hyperedges; each hyperedge is a subset of $[n]$. The relations (also refered to as atoms) of the query are $R_J(\bx_J), J \in \mE$, where $\bx_J = (x_j)_{j\in J}$ is the schema (or variables) of $R_J$, for any $J \subseteq [n]$.
The $\cq$ is:
\begin{equation}
   Q(\bx_{\mF}) \cd \bigwedge_{J \in \mE} R_J(\bx_J) \label{eqn:conjunctive:query}
\end{equation}
The variables in the head of the query $\bx_\mF$ ($\mF \subseteq [n]$) are the {\em free} variables (or the projection variables). A CQ is 
{\em full} if $\mF = [n]$, and it is {\em Boolean} 
if $\mF = \emptyset$, simply written as $Q()$. A database $\mD$ is a finite set of relations. \revtwo{We refer to the relational instance\footnote{We will frequently refer to the relational instance $R_J^\mD$ as just relation for the sake of brevity.} for relation $R_J$ in database $\mD$ using $R_J^\mD$.} The size $|R^\mD_J|$ of a relation $R^\mD_J$ is the number of its tuples.  The size of a database $|\mD| = \sum_{R_J \in \mD} |R^\mD_J|$. \revthree{A schema $\mathcal{X} = (x_1, \dots, x_g)$ of a relation is a non-empty tuple of distinct variables where each variable can takes values from domain $\domain$. We treat schemas and sets of variables interchangeably, assuming a fixed ordering of variables.} We will say that a variable $x \in \mV(\mH)$ that is present in at least two relations is a {\em join variable}, while a variable that is present in exactly one relation \looseness=-1 is {\em isolated}. 



\smallskip
\introparagraph{Tree Decomposition} A {\em tree decomposition} of a hypergraph $\mH = ([n], \mE)$ is a pair $(\tree, \chi)$ where $\tree$ is a tree and $\chi: V(\tree) \to 2^{[n]}$ maps each node
$t$ of the tree to a subset $\chi(t)$ of vertices such that

(1) Every hyperedge $F\in \mE$ is a subset of some $\chi(t)$, $t\in V(\tree)$,

(2) For every vertex $v \in [n]$,
    the set $\{t \mid v \in \chi(t)\}$ is a non-empty
    (connected) sub-tree of $\tree$. 

The sets $e = \chi(t)$ are called the {\em bags} of the tree decomposition. We use $\chi^{-1}(e)$ to recover the node of the tree with bag $e$. A query is said to be {\em $\alpha$-acyclic}\footnote{Throughout the paper, we will use acyclic to mean $\alpha$-acyclic.} if there exists a tree decomposition such that every bag $B$ of the tree decomposition corresponds uniquely to an input relation $R_B$. Such a tree decomposition is referred to as the {\em join tree} of the query. It is known that an $\alpha$-acyclic query can be evaluated in time $O(|\mD| + |\tOUT|)$. For simplicity, we will sometimes refer to the node of a join tree through the relation assigned to the node (or the relational instance) since there is a one-to-one mapping. A {\em rooted tree decomposition} is a tree decomposition that is rooted at some node $r \in V(\tree)$ (denoted by $(\tree, \chi, r)$. Different choices of $r$ change the orientation of the tree. We will use $\mL(\tree) \subseteq V(\tree)$ to denote the set of leaf nodes of a rooted tree decomposition and $\mI(\tree) = V(\tree) \setminus \mL(\tree)$ as the set of internal nodes (i.e. all non-leaf nodes) of the rooted tree decomposition. For any node $t \in V(\tree)$ in a rooted tree decomposition, we use $\mF_t$ to denote the set of free variables that appear in the subtree rooted at $t$ (including free variables in $\chi(t)$). The subtree rooted at $t$ is denoted via $\tree_t$.

A CQ $Q$ with free variables $\mF$ is called {\em free-connex acyclic} it is $\alpha$-acyclic and the hypergraph $([n], \mE \cup \{\mF\})$ is also $\alpha$-acyclic.

\smallskip
\introparagraph{Tuples and Operators} A \revthree{{\it tuple $v$}} over a set of variables $\bx_J$ is a total function that maps each variable $x \in \bx$ to a value in $\domain$. Given a tuple $v$ defined over $\bx$, and a set of variables $\mS \subseteq \bx$, $t(\mS)$ is the restriction of $t$ onto $\mS$. For a relation $R_J$ over variables $\bx_J$, $\mS \subseteq \bx_J$, and a tuple $s = v(\mS)$, we define $\sigma_{\mS = s} (R_J) = \{t \mid t \in R^\mD_J \land t(\mS) = s\}$ as the set of tuples in $R^\mD_J$ that agree with $s$ over variables in $\mS$, and $\pi_{\mS}(R_J) = \{t(\mS) \mid t \in R^\mD_J\}$ as the set of restriction of the tuples in $R^\mD_J$ to the variables in $\mS$. {The output or result of evaluating a CQ $Q$ over $\mD$ (denoted $Q(\mD)$) can be defined as $\{ \pi_{\bx_\mF}(v(\bx_{[n]})) \mid v(\bx_J) \in R^\mD_J, \forall J \in \edges  \}$.}
We denote by $\tOUT$ the result of running $Q$ over database $\mD$ (i.e., $\tOUT = Q(D)$) and we use $|\tOUT|$ to denote the number of tuples in $Q(D)$.

For relation $R_J$ over variables $\bx_J$, a threshold $\Delta$, and a set $\mS \subset \bx_J$, we say that a tuple $v(\mS)$ is {\it heavy} if $|\sigma_{\mS = v(\mS)} (R^\mD_J)| > \Delta$, and {\it light otherwise}. We will use $d(v, \mS, R^\mD_J)$ to denote $|\sigma_{\mS = v(\mS)} (R^\mD_J)|$. \revtwo{The \emph{semijoin} of two relations $R^\mD_{J_1}$ and $R^\mD_{J_2}$ (denoted as $R^\mD_{J_1} \ltimes R^\mD_{J_2}$) is defined as $\pi_{J_1} (R^\mD_{J_1} \land R^\mD_{J_2})$. A \emph{full reducer}~\cite{bernstein1981power} of a database $\mD$ is a finite sequence (that only depends on the schema of the relations in $\mD$) of semijoin operations that filters out tuples from the relations in the database $R^\mD_J$ such that $R^\mD_J = \pi_{J} (\bigwedge_{J \in \mE} R^\mD_J(\bx_J))$, i.e., all remaining tuples in the input relations participate in the result of the full join query $\bigwedge_{J \in \mE} R^\mD_J(\bx_J)$.} 

\smallskip
\introparagraph{Model of Computation} We use the standard RAM model with uniform cost measure. For an instance of size $N$, every register has length $O(\log N)$. Any arithmetic operation (such as addition,
subtraction, multiplication and division) on the values of two registers can be done in $O(1)$ time. Sorting the values of $N$ registers can be done in $O(N \log N)$ time.
\section{Projection Width}

Consider a CQ $Q$ with hypergraph $\mH$ and free variables  $\bx_\mF$.
We first define the {\em reduced query}  of $Q$, denoted $\textsf{red}[Q]$ (in~\cite{hu2024fast}, this is called a cleansed query) via Algorithm~\ref{alg:measure}. The while-loop of the procedure essentially follows the GYO algorithm~\cite{marc1979universal, yu1979algorithm}, with the difference that we can remove isolated variables only if they are not free. The reduced query is well-defined because the resulting hypergraph after the while-loop terminates is the same independent of the sequence of vertex and hyperedge removals.
If $\mF = V(\mH)$, the procedure does not remove any variables (but may potentially remove hyperedges). If $\mF = \emptyset$, the while-loop is identical to the GYO algorithm and will return the hypergraph $(\emptyset,\{\{\}\})$. We say that $Q$ is {\em reduced} if $Q = \textsf{red}[Q]$, i.e., the query cannot be further reduced.

\begin{algorithm}[!ht]
    \DontPrintSemicolon 
    \SetKwInOut{Input}{Input}
    \SetKwInOut{Output}{Output}
    \Input{acyclic $Q$ with hypergraph $\mH$ and free variables $\bx_\mF$}
    \Output{$\textsf{red}[Q]$}
    \SetKwFunction{subquery}{\textsc{processSubquery}}
    \SetKwProg{myproc}{\textsc{procedure}}{}{}
    \SetKwData{ret}{\textbf{return}}
    {$\mH' \leftarrow $multihypergraph of $\mH$ \tcc*{Edges $\mE'$ in $\mH'$ are a multiset }} 
    \While{$\mH'$ has changed}{
       \If{ $\exists$ isolated variable $x \notin \bx_\mF$}{
       \revthree{\ForEach{$e \in E(\mH')$}{
        $e \leftarrow e \setminus \{x\}$ \tcc*{remove $x$ from all hyperedges\label{line:remove:inedge}}
       }}
       $V(\mH') \gets V(\mH) \setminus \{x\}$ \tcc*{remove $x$ from the vertex set}\label{line:remove:vertex}}
       \If{ $\exists$ $e, f \in E(\mH')$ such that $e \subseteq f$}{$E(\mH') \gets E(\mH') \setminus \{e\}$ \label{line:remove:edge}}       
    }
    \ret $\mH',\mF$
    \caption{Reduced CQ} \label{alg:measure}
 \end{algorithm}

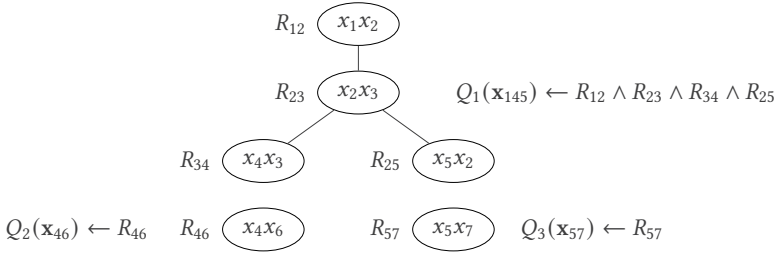
\begin{figure}
\eat{
    \begin{subfigure}{0.45\textwidth}
    \scalebox{0.77}{
	\centering
		\begin{tikzpicture}
                \node (A) [draw=black, ellipse, align=center] {$x_1 {x}_2$};
                \node [draw=none, left of=A] {$R_{12}$};
                \node (B) [draw=black, ellipse, below of=A] {${x}_2 {x}_3$};
                \node [draw=none, left of=B] {$R_{23}$};
                \draw[-] (A) edge node[draw=none] {} (B);
                \node (C) [draw=black, ellipse,below of=B, xshift = -40pt] {${x}_4 {x}_3$};
                \node (D) [draw=black, ellipse,below of=B, xshift = 40pt] {${x}_5 {x}_2$};
                \node [draw=none, left of=C] {$R_{34}$};
                \node [draw=none, left of=D] {$R_{25}$};
                \draw[-] (B) edge node[draw=none] {} (C);
                \draw[-] (B) edge node[draw=none] {} (D);
                \node (E) [draw=black, ellipse,below of=C] {${x}_4 {x}_6$};
                \node [draw=none, left of=E] {$R_{46}$};
                \node (F) [draw=black, ellipse,below of=D] {${x}_5 {x}_7$};
                \node [draw=none, left of=F] {$R_{57}$};
	\end{tikzpicture}}
	\caption{The graph $G^\exists_Q$.} \label{ex:complex}
    \end{subfigure}

    \begin{subfigure}{0.30\textwidth}
    \scalebox{0.77}{
	\centering
		\begin{tikzpicture}
                \node (A) [draw=black, ellipse, align=center] {${x}_2$};
                \node [draw=none, left of=A] {$R'_{12}$};
                \node (B) [draw=black, ellipse, below of=A] {${x}_2 {x}_3$};
                \node [draw=none, left of=B] {$R'_{23}$};
                \draw[-] (A) edge node[draw=none] {} (B);
                \node (C) [draw=black, ellipse,below of=B, xshift = -40pt] {${x}_3$};
                \node (D) [draw=black, ellipse,below of=B, xshift = 40pt] {${x}_2$};
                \node [draw=none, left of=C] {$R'_{34}$};
                \node [draw=none, left of=D] {$R'_{25}$};
                \draw[-] (B) edge node[draw=none] {} (C);
                \draw[-] (B) edge node[draw=none] {} (D);
                \node (E) [draw=black, ellipse,below of=C] {$ $};
                \node [draw=none, left of=E] {$R'_{46}$};
                \node (F) [draw=black, ellipse,below of=D] {$ $};
                \node [draw=none, left of=F] {$R'_{57}$};
	\end{tikzpicture}}
	\caption{Hypergraph after removing free variables.} \label{ex:complexfreeremoved}
    \end{subfigure}}
    \scalebox{0.99}{
	\centering
		\begin{tikzpicture}
                \node (A) [draw=black, ellipse, align=center] {$x_1 {x}_2$};
                \node [draw=none, left of=A] {$R_{12}$};
                \node (B) [draw=black, ellipse, below of=A] {${x}_2 {x}_3$};
                \node [draw=none, left of=B] {$R_{23}$};
                \node [draw=none, right of=B, xshift = 80pt] {$Q_1(\bx_{145}) \leftarrow R_{12} \wedge R_{23} \wedge R_{34} \wedge R_{25}$};
                \draw[-] (A) edge node[draw=none] {} (B);
                \node (C) [draw=black, ellipse,below of=B, xshift = -40pt] {${x}_4 {x}_3$};
                \node (D) [draw=black, ellipse,below of=B, xshift = 40pt] {${x}_5 {x}_2$};
                \node [draw=none, left of=C] {$R_{34}$};
                \node [draw=none, left of=D] {$R_{25}$};
                \draw[-] (B) edge node[draw=none] {} (C);
                \draw[-] (B) edge node[draw=none] {} (D);
                \node (E) [draw=black, ellipse,below of=C] {${x}_4 {x}_6$};
                \node [draw=none, left of=E] {$R_{46}$};
                \node (F) [draw=black, ellipse,below of=D] {${x}_5 {x}_7$};
                \node [draw=none, left of=F] {$R_{57}$};
                \node [draw=none, left of=E, xshift = -50pt] {$Q_2(\bx_{46}) \leftarrow R_{46}$};
                \node [draw=none, right of=F, xshift = 30pt] {$Q_3(\bx_{57}) \leftarrow R_{57}$};
	\end{tikzpicture}} 
    \caption{Depiction of the graph $G^\exists_Q$ and the decomposition of the running example query $Q(\bx_{14567}) \leftarrow R_{12}(\bx_{12}) \wedge R_{23}(\bx_{23}) \wedge R_{34}(\bx_{34}) \wedge R_{25}(\bx_{25}) \wedge R_{46}(\bx_{46}) \wedge R_{57}(\bx_{57})$} \label{ex:cc}
\end{figure}

The following \revone{three} properties of reduced queries will be important in this section.

\begin{propositionrep} \label{prop:reduced}
An acyclic $Q$ is free-connex acyclic if and only if all the variables of $\textsf{red}[Q]$ are free.
\end{propositionrep}
\begin{proof}
    Consider an acyclic CQ $Q$ such that all its variables in $\textsf{red}[Q]$ are free. Then, the hypergraph formed by adding a hyperedge containing all free variables is still acyclic, and thus satisfies the definition of free-connex acyclic. Indeed, there exists a join tree for the new hypergraph where the new hyperedge with all free variables is the root and all remaining hyperedges are its children. 

    For the other direction, consider an acyclic CQ $Q$ such that there exists a non-free variable (say $z$) in $\textsf{red}[Q]$. Clearly, such a variable is not isolated. Let $e_1$ and $e_2$  be two hyperedges that contain $z$. Let $E^*$ be the hyperedge added to the hypergraph containing all the variables. We claim that the new hypergraph containing edge $E^*$ is no longer acyclic. Indeed, suppose that $E^*$ is the root of the join tree (if there exists a join tree, we can reorient it to make any node as the root). Note that $e_1$ and $e_2$ cannot be subsets of each other and thus, one cannot be in the subtree of the other in the join tree. This is because if (say) $e_1$ contains a free variable (say $f$) that is not contained in $e_2$, then having $e_1$ in the subtree of $e_2$ will violate the variable connectedness condition for $f$, as $f$ is present in $E^*$ and $e_1$ but not in $e_2$. Thus, $e_1$ and $e_2$ must be either be directly connected to $E^*$ since both $e_1$ and $e_2$ may contain isolated free variables (all of which are also present in $E^*$), or they may be connected to the root through a series of intermediate nodes. However, the variable $z$ is not present in $E^*$ as $E^*$ only contains free variables and thus, the connectedness condition cannot be satisfied for variable $z$. Thus, a join tree cannot exist and the query cannot be free-connex acyclic.
\end{proof}
    
\begin{propositionrep} \label{prop:leaffree}
Let $Q$ be a reduced acyclic CQ, and $\tree$ be a join tree of $Q$. 
Then, every leaf node of $\tree$ has an isolated free variable.
\end{propositionrep}
  \begin{proof}
     Suppose there exists a leaf node $t$ that violates the desired property. First, we establish that $\chi(t)$ contains free variable(s). For the sake of contradiction, let us assume that $\chi(t)$ contains no free variable. Since all isolated variables have already been removed, it must be the case that all remaining variables in $\chi(t)$ are also present in its parent and thus are join variables. However, such hyperedges are removed by \autoref{line:remove:edge} of~\autoref{alg:measure}. Therefore, $\chi(t)$ must contain free variable(s).

     Next, suppose that the free variable is not isolated. Then, it must be the case that the free variable is also present in the parent (otherwise it would be isolated). By the same argument as before, since all other variables in $\chi(t)$ are also present in the parent, we get a contradiction since such hyperedges are removed by~\autoref{alg:measure}. This concludes the proof.
 \end{proof}

\revone{\begin{proposition} \label{prop:redprop}
Let $Q$ be a reduced acyclic CQ, and $\tree$ be a join tree of $Q$. Then, no variable in any node of $\tree$ can be isolated and non-free.
\end{proposition}

\autoref{prop:redprop} follows directly from the operation performed on line~\ref{line:remove:inedge}-\ref{line:remove:vertex} in~\autoref{alg:measure}.

}

Second, we define the {\em decomposition} of a CQ $Q$ following~\cite{hu2024fast}. Define the graph $G^\exists_Q$, where each hyperedge is a vertex, and there is an edge between $e,e'$ if they share a non-free variable. Let $E_1, \dots, E_k$ be the connected components of $G^\exists_Q$. Then, the decomposition of $Q$, denoted $\textsf{decomp}(Q)$, is a set of queries $\{Q_1, \dots, Q_k\}$, where $Q_i$ is the CQ with hypergraph $(\bigcup_{e \in E_i} e, E_i)$ and free variables $\mF \cap \bigcup_{e \in E_i} e$. If the decomposition of $Q$ has exactly one query, we say that $Q$ is {\em existentially connected}.

\begin{definition}[Projection Width] \label{def:reduced}
Consider an acyclic CQ $Q$. Then, $\wout(Q)$ is the maximum number of relations across all queries in $\textsf{decomp}(\textsf{red}[Q])$.
\end{definition}




\begin{example} \label{ex:star}
Consider the query 
$$Q^\star_\ell(x_1, \dots, x_\ell) \leftarrow R_1(x_1,y) \wedge \dots \wedge R_k(x_\ell,y).$$
One can observe that $\textsf{red}[Q^\star_\ell] = Q^\star_\ell$. The decomposition of the reduced query has only one connected component with $\ell$ atoms, hence $\wout(Q^\star_\ell) = \ell$.
\end{example}

\begin{example}
Consider the query $Q(\bx_{14567}) \leftarrow R_{12}(\bx_{12}) \wedge R_{23}(\bx_{23}) \wedge R_{34}(\bx_{34}) \wedge R_{25}(\bx_{25}) \wedge R_{46}(\bx_{46}) \wedge R_{57}(\bx_{57})$. Note the query is already reduced since there are no isolated variables or any hyperedges that are contained in another. To compute the projection width of the query, \autoref{ex:cc} shows the graph $G^\exists_Q$ where each atom of the query becomes a vertex. Note that there is no edge between $R_{34}$ and $R_{46}$ because $x_4$ is a free variable. This graph has three connected components, and the largest connected component contains four hyperedges, hence $\wout(Q(\bx_{14567})) = 4$. \autoref{ex:cc} also shows the three queries corresponding to each connected component of $G^\exists_Q$.
\end{example}

 We observe that $1 \leq \wout(Q) \leq |E(\mH)|$, and also that $\wout(Q)$ is always an integer. When $Q$ is full, every hyperedge of the reduced hypergraph forms its own connected component of size one; in this case, $\wout(Q)=1$. More generally:

 \begin{propositionrep}
 An acyclic CQ $Q$ is free-connex acyclic if and only if $\wout(Q)=1$.
 \end{propositionrep}
 \begin{proof}
Suppose that $Q$ is free-connex. From Proposition~\ref{prop:reduced}, $\textsf{red}[Q]$ has only free variables. Hence, in the decomposition every hyperedge forms a connected component of size one. Thus, $\wout(Q)=1$.

For the other direction, suppose that $\wout(Q)=1$. Let us examine the reduced query $Q' = \textsf{red}[Q]$. Then, every query in the decomposition of $Q'$ has size one. However, this implies in turn that all variables of $Q'$ are free; indeed, if there exists a non-free variable $x$, the variable $x$ would be isolated and that would violate the fact that $Q'$ is reduced.  From Proposition~\ref{prop:reduced}, we now have that $Q$ is free-connex acyclic.
\end{proof}

Hu~\cite{hu2024fast} defined a notion similar to $\wout(Q)$, called free-width, $\textsf{freew}(Q)$. To compute free-width, we first define the free-width of an existentially connected query to be the size of the smallest set of hyperedges that covers all the isolated variables. Then, $\textsf{freew}(Q)$ is the maximum freewidth over all queries in the decomposition of $Q$. It is easy to see that $\textsf{freew}(Q) \leq \wout(Q)$.


  \section{Main Result}

In this section, we describe our main result that builds upon the celebrated Yannakakis algorithm. First, we recall the algorithm and its properties as outlined in~\cite{yannakakis1981algorithms}. \revtwo{The first step of the algorithm is to apply the full reducer which removes all tuples from the input database relations that do not contribute to the output.} \revthree{The main idea of the algorithm is to use a bottom-up evaluation strategy over the join tree. In particular, a node $s$ is processed once each of its children have been processed. Consider node $s$ whose children are all leaves. The key step of the algorithm is to do the join of relational instances corresponding to $s$ and each of its children but projecting the output on only the free variables in the subtree and the variables on node $s$. The bottom-up process continues until we reach root and final join result is returned.}

Yannakakis showed two key properties of~\autoref{alg:yann}. First, when the processing of a node $s$ is over (i.e. when the algorithm has finished execution of \autoref{line:j} for node $s$), it holds that 
$$T^\mD_{\chi(s)} = \pi_{\mF_{s} \cup \chi(s)} (\wedge_{B \in V(\tree_{s})} R^\mD_{\chi(B)})$$

Therefore, when only node $r$ is left in the tree, then $T^\mD_{\chi(r)} = \pi_{\mF_{r} \cup \chi(r)} (\wedge_{B \in V(\tree)} R^\mD_{\chi(B)})$, and thus, $\pi_{\bx_\mF} (T^\mD_{\chi(r)})$ gives the desired result. Second, Yannakakis showed that the entire algorithm takes time $O(|\mD|+|\mD| \cdot |\tOUT|)$, which is the time taken to execute~\autoref{line:j} in each iteration.

\begin{algorithm}[!ht]
    \DontPrintSemicolon 
    \SetKwInOut{Input}{Input}
    \SetKwInOut{Output}{Output}
    \Input{acyclic query $Q(\bx_\mF)$, database instance $\mD$, rooted join tree $(\tree, \chi,r)$}
    \Output{$Q(\mD)$}
    \SetKwFunction{subquery}{\textsc{processSubquery}}
    \SetKwProg{myproc}{\textsc{procedure}}{}{}
    \SetKwData{ret}{\textbf{return}}
    $\mD := (R^\mD_{\chi(s)})_{s \in V(\tree)} \leftarrow $ \textbf{apply} a full reducer for $\mD$ \label{line:reduce}\;
    $K \leftarrow \text{a queue of } V(\tree)$ following a post-order traversal of $\tree$ \\
    \While{$K \neq \emptyset$ }{
        $s \leftarrow K.\textit{pop}()$ \;
        \If{$s$ is a leaf \revone{in $\tree$}}{
            $T^\mD_{\chi(s)} = \pi_{{\chi(s)}} (R^\mD_{\chi(s)}(\bx_{\chi(s)}))$ \label{line:trueleaf}\\
        } \Else{
            $T^\mD_{\chi(s) \cup \mF_s} = \pi_{{\chi(s) \cup \mF_s}} \bigg(R^\mD_{\chi(s)} \wedge \left(\bigwedge_{(s, t) \in E(\tree)} \pi_{\mF_t \cup (\chi(s) \cap \chi(t))} T^\mD_{\chi(t)} \right) \bigg)$ \label{line:j} \\
            $\chi(s) \leftarrow \chi(s) \cup \mF_s$ \label{line:end} \\
        }
    }
    \ret $\pi_{\mF}(T^\mD_{\chi(r)})$
    \caption{Yannakakis Algorithm~\cite{yannakakis1981algorithms}} \label{alg:yann}
 \end{algorithm}

\subsection{Output-sensitive Yannakakis}

This section presents a general lemma that forms the basis of our main result. In particular, we will show that under certain restrictions of the instance and the root of the join tree, \autoref{alg:yann} (Yannakakis algorithm) achieves a better runtime by a factor of $\Delta$. In the next section, we will present an algorithm that takes advantage of this observation. 

The first condition is that the root node must contain an isolated free variable. This is not necessarily true for any node in the join tree, but we can make it happen by choosing the root to be a leaf (recall that in a reduced instance, every leaf node has an isolated free variable). The second condition is that all tuples in the root node over the join variables are heavy w.r.t. a degree threshold $\Delta$. This is not generally true (unless $\Delta=1$), so we will need to partition the instance to achieve this requirement. For a join tree $(\tree,\chi)$, we define $\chi^\Join(s) \subseteq \chi(s)$ to return only the join variables of $\chi(s)$ \eat{and $\chi^{IF}(s) \subseteq \chi(s)$ to return the isolated free variables of $\chi(s)$. For a reduced query, it holds that $\chi^\Join(s) \cap \chi^{IF}(s) = \emptyset$ and \autoref{prop:redprop} tells us that $\chi^\Join(s) \cup \chi^{IF}(s) = \chi(s)$.}

\begin{lemma} \label{lem:basic}
Let $Q(\bx_\mF)$ be a reduced acyclic CQ, $(\tree, \chi,r)$ a rooted join tree of $Q$, and $\mD$ be a database instance. Suppose that
    \begin{enumerate}
        \item $R_{\chi(r)}$ has at least one isolated free variable
        \item for every $v \in R^\mD_{\chi(r)}$, we have $d(v, \chi^\Join(r), R^\mD_{\chi(r)}) > \Delta$ for some integer $\Delta \geq 1 $.
    \end{enumerate}
Then, \autoref{alg:yann} runs in time $O(|\mD| + (\sum_{t \in \mI(\tree)} |R^\mD_{\chi(t)}|) \cdot |\tOUT|/\Delta)$.
\end{lemma}
\begin{proof}

\begin{figure}[t]
	\centering
  \scalebox{0.9}{
		\begin{tikzpicture}
                \node (V) [draw=black, ellipse, minimum width=50pt,align=center,label={left:$R_{\chi(r)}$}] {$\chi(r)$};
                \node (E) [draw=none, below of=V] {$\vdots$};
                \draw[->] (V) edge node[draw=none] {} (E);
                \node (F) [draw=black, ellipse, below of=E, minimum width=50pt,align=center,label={left:$R_{\chi(s)}$}] {$\chi(s)$};
                \node (G) [draw=black, ellipse, below of=F, xshift=-120pt, yshift=-20pt, minimum width=50pt,align=center,label={left:$T_{\chi(t_1)}$}] {$\chi(t_1)$};
                \draw[->] (F) edge node[draw=none, xshift=-5pt, label=left:{\small$\chi(t_1) \cap \chi(s)$}] {} (G) ;
                \node (H) [draw=black, ellipse, below of=F, yshift=-20pt,minimum width=50pt,align=center,label={left:$T_{\chi(t_2)}$}] {$\chi(t_2)$};
                \draw[->] (F) edge node[draw=none,xshift=5pt, label=left:{\small$\chi(t_2) \cap \chi(s)$}] {} (H);
                \node (I) [draw=none, right of=H, xshift=25pt] {$\dots$};
                \node (J) [draw=black, ellipse, right of=I, xshift=40pt, minimum width=50pt,align=center,label={left:$T_{\chi(t_k)}$}] {$\chi(t_k)$};
                \draw[->] (F) edge node[draw=none, label=right:{\small$\chi(t_k) \cap \chi(s)$}, minimum width=50pt,xshift=10pt] {} (J);
	\end{tikzpicture}}
	\caption{A join tree with root node $r$. Edge labels show the common variables between bag $\chi(s)$ and bag $\chi(t_i)$.} \label{fig:jt}
\end{figure}

To prove this result, we will show that the size of the intermediate result that is materialized in~\autoref{line:j} for an internal node $s$ is bounded by $|\tOUT| \cdot (|R^\mD_{\chi(s)}| / \Delta)$.

Consider the point in the algorithm where we join the node $s$ with its children nodes $t_1, \dots, t_k$, as shown in Figure~\ref{fig:jt}. \revthree{The relational instances assigned to the nodes $t_1, \dots, t_k$ may not necessarily correspond to the base relations $R^\mD_{\chi(t_i)}$ since a previous iteration of the while loop may have performed a join that led to creation of the intermediate relation $T^\mD_{\chi(t_i)}$.} Let $v$ be a tuple over the variables $\bx_{[n]}$ such that $v(\bx_\mF) \in Q(\mD)$ \revthree{and for each relation $R^\mD_J(\bx_J)$, it holds that $v(\bx_J) \in R^\mD_J$}. We first claim the following inequality:
    
\begin{equation} \label{eq:one}
    \prod_{i \in [k]}d(v, \chi(t_i) \cap \chi(s), T^\mD_{\chi(t_i)}) \leq |\tOUT|/\Delta
\end{equation} 
    
To show this, let $Z$ be \revthree{the set of all variables in the nodes $r,t_1, \dots, t_k$ except for the isolated free variables. Consider the join query $Q'(\bx_\mF)$ where the values for all variables except $Z$ have been fixed according to $v$ (i.e. the tuples in the relations are filtered as shown below). 
\begin{equation} \label{eq:qprime}
Q'(\mD) = \pi_{\bx_{\mF}}((R^\mD_{\chi(r)} \ltimes v(\chi^\Join(r))) \wedge_{i \in [k]} (T^\mD_{\chi(t_i)} \ltimes v(\chi^\Join(t_i))) \wedge_{u \in V(\tree) \setminus \{r, t_1,  \dots, t_k \}} (R^\mD_{\chi(u)} \ltimes v))
\end{equation} 

\looseness=-1
From the definition of $v$, it holds that $Q'(\mD) \subseteq Q(\mD)$. Next, we claim that $|Q'(\mD)|$ \looseness=-1 is exactly:

%
$$A_v := d(v, \chi^\Join(r), R^\mD_{\chi(r)}) \cdot \prod_{i \in [k]}d(v, \chi(t_i) \cap \chi(s), T^\mD_{\chi(t_i)}) $$

To see why this holds, first, observe that the semijoin of $v$ with all relations except for the root node and nodes $t_1, \dots, t_k$ (i.e. all the relations considered by the last conjunct of \autoref{eq:qprime}) fixes their size to one. This is because of our choice of $v$ that guarantees that $v(\bx_\mF) \in Q(\mD)$, and thus implies that the tuple formed by restricting $v$ onto the schema of each relation is present in the corresponding relational instance. Intuitively, it means that there exists a \emph{join path} from the root node to the leaf nodes $t_1, \dots, t_k$ since for each intermediate node, there is a tuple formed by restricting $v$ present in the corresponding input relation.

Next, note that $R_{\chi(r)}$ has at least one isolated free variable, and tuple $v(Z)$ fixes values for all join variables in the relation. Similarly, tuple $v(Z)$ also fixes values for all join variables in $R_{\chi(t_i)}$, which are $\chi(t_i) \cap \chi(s)$. Since the query is reduced, \autoref{prop:leaffree} guarantees that every leaf node has an isolated free variable. Further, \autoref{prop:redprop} guarantees that for all bags, every variable is either an isolated free variable or a join variable. Therefore, once the join variables of nodes $r, t_1, \dots, t_k$ have been fixed, it is guaranteed that all remaining variables in those nodes are output variables. Since we have already established that for all intermediate relations from root to the leaf nodes $t_i$, there exists tuples that join with $v(Z)$, it holds that $|Q'(\mD)| = |R^\mD_{\chi(r)} \ltimes v(\chi^\Join(r))| \cdot \Pi_{i \in [k]} |T^\mD_{\chi(t_i)} \ltimes v(\chi^\Join(t_i))| = A_v$.

To finish the claim, note that {$A_v = |Q'(\mD)| \leq |Q(\mD)| = |\tOUT|$} and also from assumption (2) of the lemma, we have that $d(v, \chi^\Join(r), R^\mD_{\chi(r)}) > \Delta$.}

\smallskip
To complete the proof, let $W = \chi(s) \cap (\bigcup_{i \in [k]} \chi(t_i))$  
denote the join variables of node $s$ that are also present in some leaf node $t_1, \dots, t_k$. We observe that the size of the intermediate bag  $\pi_{\chi(s) \cup \mF_s} (R^\mD_{\chi(s)} \wedge  T^\mD_{\chi(t_1)} \wedge \dots \wedge T^\mD_{\chi(t_k)})$ can be bounded by
    \begin{align*}
        &\sum_{w \in \pi_{W} (R^\mD_{\chi(s)})} |R^\mD_{\chi(s)} \ltimes w| \cdot \prod_{i \in [k]} |T^\mD_{\chi(t_i)} \ltimes w| \\
       & = \sum_{w \in \pi_{W} (R^\mD_{\chi(s)})} d(w, W, R^\mD_{\chi(s)}) \cdot \prod_{i \in [k]} d(w, \chi(t_i) \cap \chi(s), T^\mD_{\chi(t_i)}) \\ 
       &\leq \frac{|\tOUT|}{\Delta} \cdot \sum_{w \in \pi_{W} (R^\mD_{\chi(s)})} d(w, W, R^\mD_{\chi(s)}) \tag*{(using~\autoref{eq:one})} \\ 
       &\leq \frac{|\tOUT|}{\Delta} \cdot |R^\mD_{\chi(s)}| \tag*{(sum of all degrees is equal to the relation size)}
    \end{align*}

\revthree{Here, the first line bounds the total join size as the sum of sizes of the cartesian product of the relations after semijoin for each fixing of $w \in \pi_{W} (R^\mD_{\chi(s)})$. The first inequality holds since the degree product bound holds for all tuples $w$. Indeed, since the query is acyclic, once a full reducer has been applied, for each $w$, there exists a tuple $v$ over $\bx_{[n]}$ such that $w = v(W)$, $v(\bx_\mF) \in Q(D)$, and $v(\bx_J) \in R^\mD_J$ for each relation in $\mD$}. Observe that the time bound requires only using the size of the relation assigned to the parent of the leaf nodes (and not the sizes of the relations assigned to the leaf nodes itself).
\end{proof}

\looseness=-1
\introparagraph{Discussion} Observe that \autoref{lem:basic} degenerates to the standard bound of Yannakakis algorithm for $\Delta = 1$. However, as we will show in the next part, choosing $\Delta > 1$ can lead to better overall join processing algorithms. It is also interesting to note that the running time bound obtained in \autoref{lem:basic} can only be achieved when the root of the decomposition is a relation with the two properties as outlined in the statement. It can be shown that no other choice of the root node achieves a time better than $O(|\mD| \cdot |\tOUT|)$. \revone{It is important to note that \autoref{lem:basic} requires the query to be reduced. Indeed, without the reduced query requirement, relations may contain variables that are neither free and nor join. The presence of such variables renders the join size \looseness=-1 computation incorrect.}

 \subsection{Our Algorithm}
In this section, we will present the algorithm for our main result. We will first consider a CQ $Q$ that is existentially connected and reduced.

Algorithm~\ref{alg:our} shows the improved procedure. It processes the nodes of the join tree in a leaf-to-root order just like Yannakakis algorithm, \revone{i.e., a node $s$ is processed only after each of its children have been processed}. However, our algorithm departs in the operations involved in the processing of each \revone{node}. In particular, consider node $s$ whose children are all leaves. For every leaf node $t$ that is a child of $s$, we partition the relation $T^\mD_\chi(t)$ assigned to the node into two disjoint partitions (the heavy partition $T^{\mD,H}_\chi(t)$ and the light partition $T^{\mD,L}_\chi(t)$) using a chosen degree threshold. Then, we use \autoref{lem:basic} to process $T^{\mD,H}_\chi(t)$ and add the produced output into a set $\mJ$. This processing is done in lines~\ref{call:lemone}-\ref{line:end} of the while loop. The crucial detail is that the processing of the heavy partition is done by reorienting the join tree to be rooted at $t$, and then applying Yannakakis in a bottom-up \looseness=-1 fashion. 

 Once all heavy sub-relations of the leaf nodes are processed, we are left with all light parts of the relations for the children of node $s$. At this point, we join the relations in the subtree rooted at $s$ (line~\ref{line:join}) and then remove the nodes for children of $s$ from the join tree $\tree$(line~\ref{line:del}). \revone{The step on line~\ref{line:join} is identical to the one on line~\ref{line:j} of the Yannakakis algorithm. Note that modifying the structure of the join tree $\tree$ by deleting nodes in our algorthm is a departure from Yannakakis algorithm. Although modification of $\tree$ is not required for correctness of our algorithm, as we will show later, by controlling the node processing order $K$ (on line~\ref{line:order}), one can use our algorithm in innovative ways. Therefore, when $K$ is a subset of $V(\tree)$ instead of containing all the nodes, it is important to keep the database $\mD$ and $\tree$ up-to-date to reflect which nodes have been processed.} We next state the main result \looseness=-1 (the proof can be found in the appendix).

\begin{algorithm}[!ht]
    \DontPrintSemicolon 
    \SetKwInOut{Input}{Input}
    \SetKwInOut{OInput}{Optional Input}
    \SetKwInOut{Output}{Output}
    \Input{\revthree{reduced and existentially connected} acyclic query $Q(\bx_\mF)$, instance $\mD$, rooted join tree  $(\tree, \chi)$, $1 \leq \Delta \leq |\mD|$}
    \Output{$Q(\mD), (\tree, \chi), \mD$}
    \SetKwFunction{subquery}{\textsc{processSubquery}}
    \SetKwFunction{heavyleaf}{\textsc{processHeavyLeaf}}
    \SetKwFunction{lightleaf}{\textsc{processLightLeaves}}
    \SetKwProg{myproc}{\textsc{procedure}}{}{}
    \SetKwData{ret}{\textbf{return}}
    \textbf{choose} an arbitrary root $r$ for $\tree$ \\
     $\mD := (R^\mD_{\chi(s)})_{s \in V(\tree)} \leftarrow $ \textbf{apply} a full reducer for $\mD$ \\ 
     \revone{$\tree^{IN} \leftarrow$ clone of $\tree$ \tcc*{clone of the join tree since we will edit $\tree$ in-place}}
     $N \leftarrow |\mD|$ \tcc*{storing the size of the input}
        $K \leftarrow \text{a \revone{queue} of } s \in V(\tree)$ following a post-order traversal of $\tree$  \label{line:order}\;
    \While{$K \neq \emptyset$ }{
     $s \leftarrow K.\textit{pop}()$ \\
      \If( \label{line:leaf}){$s$ is a leaf \revone{in $\tree$}}{
            \If{$s$ is not the root of $\tree$} {
            	\revone{\If{$s$ is a leaf in $\tree^{IN}$} {
             \tcc{identical to line~\ref{line:trueleaf} in Alg~\ref{alg:yann} and initializes the $T^\mD_{\chi(s)}$}
            		$T^\mD_{\chi(s)} = \pi_{\chi(s)} (R^\mD_{\chi(s)}(\bx_{\chi(s)}))$ 
            	}}
                $\Delta_s \leftarrow \Delta \cdot (|T^\mD_{\chi(s)}| + N)/ N$ \label{line:delta}\;
                $T_{\chi(s)}^{\mD,H} = \{\bv \in T^\mD_{\chi(s)} \mid |\sigma_{ \bv(\chi^\Join(s))}(T^\mD_{\chi(s)})| > \Delta_s \}, \; T_{\chi(s)}^{\mD,L} = T^\mD_{\chi(s)} \setminus T_{\chi(s)}^{\mD,H}$ \;  \label{disjoint}
                $\mD^H_s \leftarrow (\mD \setminus T^\mD_{\chi(s)}) \cup T_{\chi(s)}^{\mD,H}$ \;
                \textbf{let} $Q_s(\mD^H_s)$ be the output of \autoref{alg:yann} with input as $\mD^H_s$ and $\tree$ rooted at $s$ \label{call:lemone}\; 
                $\mathcal{J} \leftarrow \mathcal{J} \cup Q_s(\mD^H_s)$ \label{union1}\;
                $\mD \leftarrow $ \textbf{apply} a full reducer for $(\mD \setminus T^\mD_{\chi(s)}) \cup T_{\chi(s)}^{\mD,L}$ \label{line:end} \;
                $T^\mD_{\chi(s)} =  \pi_{\chi(s)} (T_{\chi(s)}^{\mD,L})$ \;
            } \Else {
            $\mathcal{J} \leftarrow \mathcal{J} \cup \pi_{\mF}(T^\mD_{\chi(r)})$ \label{union2} \tcc*{$s$ is the only node in the tree}
            }
      } \Else(\label{line:nonleaf}){
            $T^\mD_{\chi(s) \cup \mF_s} = \pi_{\chi(s) \cup \mF_s} \bigg(T^\mD_{\chi(s)} \wedge \left(\bigwedge_{(s, t) \in E(\tree)} \pi_{\mF_t \cup (\chi(s) \cap \chi(t))} T^\mD_{\chi(t)} \right) \bigg)$ \label{line:join} \;
            $\chi(s) \leftarrow \chi(s) \cup \mF_s$ \;
             \textbf{truncate} all children of $s$ and directed edges $(s, t) \in E(\tree)$ from $\tree$ \label{line:del} \;
             $K.\textit{\revone{push\_to\_head}}(s)$ \label{push}\tcc*{$s$ became a leaf,  process $s$ immediately}
      }
    }    
    \ret $\mathcal{J}, (\tree, \chi), \mD$ \tcc*{Return output, tree decomposition, and instance}
    \caption{Generalized Yannakakis Algorithm} \label{alg:our}
 \end{algorithm}

 \begin{lemmarep} \label{thm:main}
     Given an acyclic join query $Q(\bx_\mF)$ that is existentially connected and reduced, database $\mD$, and an integer threshold $1 \leq \Delta \leq |\mD|$, \autoref{alg:our} computes the join result $Q(\mD)$ in time $O(|\mD| \cdot \Delta^{k-1} + |\mD| \cdot |\tOUT|/\Delta)$, where $k$ is the number of atoms in $Q$.
 \end{lemmarep}
 \begin{proof}
     We begin with a simple observation: the only step in the algorithm that modifies the join tree is the join operation on line~\ref{line:join}. In any given iteration of the while loop, once the join on line~\ref{line:join} is computed, node $s$ becomes a leaf node in the join tree since we delete all its children right after. Let $N$ denote the size of the input database\footnote{We do not use $|\mD|$ in the proof to avoid any confusion since the algorithm modifies $\mD$ repeatedly.}. Then, we get the following.
     
     \begin{observation} \label{obv:one}
         After every iteration of the while loop, it holds that all relations corresponding to the internal nodes of the join tree have size $O(N)$.
     \end{observation}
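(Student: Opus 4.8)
The plan is to prove \autoref{obv:one} by maintaining a single invariant across iterations of the while loop, established by induction on the number of completed iterations. The invariant I would maintain is that, after each iteration, every internal node $t$ of the current join tree $\tree$ stores a relation $T_{\chi(t)}$ that is a subset of the full-reduced input relation $R_{\chi(t)}$; since the size of any single input relation is at most $N$, this gives $|T_{\chi(t)}| \le N$ and hence the desired $O(N)$ bound. The whole argument thus reduces to tracking which writes to the main state can enlarge a relation, and where the enlarged relation sits.

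The first step is to record that the only operation in \autoref{alg:our} that can \emph{increase} the size of a relation stored at a node is the join on \autoref{line:join}. Every other write to the main state is size-non-increasing and yields a subset of what was already stored: the heavy/light split on \autoref{disjoint} keeps the subset $T_{\chi(s)}^L$, and the full reducer on \autoref{line:end} only deletes dangling tuples. The heavy-partition processing invoked on \autoref{call:lemone} runs \autoref{alg:yann} on a separate copy $\mD^H_s$ and merely appends its output to $\mathcal{J}$; it leaves $\tree$ and all relations $T_{\chi(\cdot)}$ untouched, so it is irrelevant to the invariant. For the base case (the state entering the first iteration), the initial full reducer sets each $T_{\chi(t)} \subseteq R_{\chi(t)}$, so every node satisfies $|T_{\chi(t)}| \le N$.

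For the inductive step I would split on the branch taken in the iteration. In the leaf branch ($s$ a leaf), the tree $\tree$ is not modified, so the partition of nodes into internal and leaf is unchanged; the only relations that change are $T_{\chi(s)}$, replaced by its light subset, and possibly others shrunk by \autoref{line:end}. Since $s$ is a leaf, no internal node gains size, and the invariant is preserved. The heart of the argument is the non-leaf branch: here \autoref{line:join} deposits the potentially large relation $T_{\chi(s)\cup\mF_s}$ at $s$, but \autoref{line:del} fires in the \emph{same} iteration, truncating all children of $s$ and thereby turning $s$ into a leaf. No node other than $s$ changes its relation, and removing $s$'s children from $\tree$ leaves the internal/leaf status of every surviving node unchanged. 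Hence the large relation resides at a node that is a leaf by the end of the iteration, and the set of internal nodes afterward is a subset of those before (with $s$ removed), each still carrying its unchanged input-subset relation.

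The one point I expect to be the main obstacle — and the thing to nail down carefully — is that a node can never revert from leaf to internal. This is what guarantees that the large collapsed leaves produced by \autoref{line:join} are never re-counted as internal nodes in a later iteration. I would argue this from the fact that the tree only ever loses edges (exclusively via \autoref{line:del}), so a node that has become childless stays childless; combined with the observation that the re-push on \autoref{push} re-enqueues the just-collapsed node precisely so it is handled by the leaf branch (where it is only shrunk), this monotonicity closes the induction and yields \autoref{obv:one}.
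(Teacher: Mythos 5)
Your proof is correct and follows essentially the same route as the paper's: the paper likewise observes that the join on \autoref{line:join} is the only size-increasing operation and that \autoref{line:del} immediately demotes $s$ to a leaf, so internal nodes only ever carry (subsets of) input relations of size $O(N)$. Your explicit subset-invariant and the monotonicity argument that a leaf never reverts to internal are just a more careful spelling-out of what the paper leaves implicit.
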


     Let us now bound the time required to process all leaf nodes. Fix a leaf node $s$. Except the operation on line~\ref{call:lemone}, all other operations take at most $O(N)$ time (note that $\mD$ is also modified in every iteration of the inner loop). We claim that  line~\ref{call:lemone} takes time $O(N \cdot |\tOUT|/\Delta)$. By our choice of root node (which is $s$) for calling Lemma~\ref{lem:basic}, $\tree$ is rooted at $s$. Further, note that the leaves of $(\tree,s)$ are the same leaves as $(\tree,r)$ (except for $s$, which is now the root). Further, by~\autoref{prop:leaffree}, it is also guaranteed that the bag of $s$ contains isolated free variable(s). Thus, the conditions of applying Lemma~\ref{lem:basic} are satisfied. Since our choice of threshold for the partition is $\Delta_s$, Lemma~\ref{lem:basic} tells us that the evaluation time required is at most big-O of

     $$\left(\sum_{B \in \mI(\tree)} |T^\mD_{\chi(B)}|\right)  \cdot |\tOUT|/\Delta_s = (|T^\mD_\chi(s)| + N) \cdot |\tOUT|/\Delta_s = N \cdot |\tOUT|/\Delta $$

     Here, the first equality holds because of~\autoref{obv:one} which guarantees that only $s$ can have $\Omega(N)$ size. The reader can verify that $\Delta \leq \Delta_s$, and thus, is well-defined.

     Next, we will bound the time required for the join operation on line~\ref{line:join}. For any node $s$, let $\#s$ denote the number of nodes in the subtree rooted at $s$ (including $s$). We will show by induction that time required for the join is $O(N \cdot \Delta^{\#s-1})$. 

     \textbf{Base Case.} In the base case, consider the leaf nodes with relations of size $N$. Consider such a leaf $\ell$ and note that the relation size satisfies $O(N \cdot \Delta^{\#\ell - 1}) = O(N)$ since the number of nodes a subtree rooted at a leaf is one (the leaf itself). 

     \textbf{Inductive Case.} Now, consider a node $s$ that is not a leaf.  The size of the relation for its child node $t$ is $O(N \cdot \tau^{\#t-1})$. For $t$, the degree threshold for partitioning the relation is $\Delta_t = \Delta \cdot (N \cdot \Delta^{\#t-1} +N) / N \leq 2 \Delta^{\#t}$. Therefore, by making the same argument as in the proof of \autoref{lem:basic}, we get the degree product as $\Pi_{t \in \text{child of } s} \Delta_{t} = O(\Delta^{\sum_{t \in \text{child of } s} \#t}) = \Delta^{\#s-1}$. Thus, the join time and the size of $R_{\chi(s)}$ is  $O(N \cdot \Delta^{\#s-1})$. Since the tree contains $|\tree|$ number of nodes, the total time required is dominated by the last iteration, giving us $O(N \cdot \Delta^{|\tree|-1})$.

     In each iteration of the while loop, one node of the tree is processed. Thus, the total number of iterations is $|\tree|=k$ and the total running time of the algorithm is $O(N \cdot \Delta^{k-1} + N \cdot |\tOUT|/\Delta)$.
 \end{proof}

  \begin{figure}
     \begin{subfigure}{0.30\textwidth}
     \centering
    \scalebox{0.77}{
		\begin{tikzpicture}
                \node (A) [draw=black, ellipse, align=center] {$x_1 {x}_2$};
                \node [draw=none, left of=A] {$R_{12}$};
                \node (B) [draw=black, ellipse, below of=A] {${x}_2 {x}_3$};
                \node [draw=none, left of=B] {$R_{23}$};
                \node [draw=none, above of=A, yshift = -10pt] {$Q_1(\bx_{145}) \leftarrow R_{12} \wedge R_{23} \wedge R_{34} \wedge R_{25}$};
                \draw[-] (A) edge node[draw=none] {} (B);
                \node (C) [draw=black, ellipse,below of=B, xshift = -40pt] {${x}_3 {x}_4$};
                \node (D) [draw=black, ellipse,below of=B, xshift = 40pt] {${x}_5 {x}_2$};
                \node [draw=none, left of=C] {$R_{34}$};
                \node [draw=none, left of=D] {$R_{25}$};
                \draw[-] (B) edge node[draw=none] {} (C);
                \draw[-] (B) edge node[draw=none] {} (D);
	\end{tikzpicture}}
	\caption{The Join tree for query ${Q_1}$} \label{fig:inputcc}
    \end{subfigure}
    \hfill
    \begin{subfigure}{0.30\textwidth}
    \centering
    \scalebox{0.77}{
		\begin{tikzpicture}
                \node (A) [draw=black, ellipse, align=center] {$x_3 {x}_4$};
                \node [draw=none, left of=A] {$R^H_{34}$};
                \node (B) [draw=black, ellipse, below of=A] {${x}_2 {x}_3$};
                \node [draw=none, left of=B] {$R_{23}$};
                \draw[-] (A) edge node[draw=none] {} (B);
                \node (C) [draw=black, ellipse,below of=B, xshift = -40pt] {${x}_1 {x}_2$};
                \node (D) [draw=black, ellipse,below of=B, xshift = 40pt] {${x}_5 {x}_2$};
                \node [draw=none, left of=C] {$R_{12}$};
                \node [draw=none, left of=D] {$R_{25}$};
                \draw[-] (B) edge node[draw=none] {} (C);
                \draw[-] (B) edge node[draw=none] {} (D);
	\end{tikzpicture}}
	\caption{Applying~\autoref{lem:basic} with $R^H_{34}$ as root node.} \label{fig:lemoneone}
    \end{subfigure}
    \hfill
    \begin{subfigure}{0.30\textwidth}
    \centering
    \scalebox{0.77}{
		\begin{tikzpicture}
                \node (A) [draw=black, ellipse, align=center] {$x_2 {x}_5$};
                \node [draw=none, left of=A] {$R^H_{25}$};
                \node (B) [draw=black, ellipse, below of=A] {${x}_2 {x}_3$};
                \node [draw=none, left of=B] {$R_{23}$};
                \draw[-] (A) edge node[draw=none] {} (B);
                \node (C) [draw=black, ellipse,below of=B, xshift = -40pt] {${x}_1 {x}_2$};
                \node (D) [draw=black, ellipse,below of=B, xshift = 40pt] {${x}_3 {x}_4$};
                \node [draw=none, left of=C] {$R_{12}$};
                \node [draw=none, left of=D] {$R^L_{34}$};
                \draw[-] (B) edge node[draw=none] {} (C);
                \draw[-] (B) edge node[draw=none] {} (D);
	\end{tikzpicture}}
	\caption{Applying~\autoref{lem:basic} with $R^H_{25}$ as root node.} \label{fig:lemonetwo}
    \end{subfigure}

    \begin{subfigure}{0.30\textwidth}
	\centering
    \scalebox{0.77}{
		\begin{tikzpicture}
                \node (A) [draw=black, ellipse, align=center] {$x_1 {x}_2$};
                \node [draw=none, left of=A] {$R_{12}$};
                \node (B) [draw=black, ellipse, below of=A] {${x}_2 {x}_3$};
                \node [draw=none, left of=B] {$R_{23}$};
                \draw[-] (A) edge node[draw=none] {} (B);
                \node (C) [draw=black, ellipse,below of=B, xshift = -40pt] {${x}_4 {x}_3$};
                \node (D) [draw=black, ellipse,below of=B, xshift = 40pt] {${x}_5 {x}_2$};
                \node [draw=none, left of=C] {$R^L_{34}$};
                \node [draw=none, left of=D] {$R^L_{25}$};
                \draw[-] (B) edge node[draw=none] {} (C);
                \draw[-] (B) edge node[draw=none] {} (D);
                \node (rect) at (B) [draw,thick,dashed,minimum width=5.5cm,minimum height=2.15cm,yshift=-0.5cm,xshift=-0.2cm] {};
                \node [draw=none, left of=rect,xshift=-1.2cm, yshift=0.8cm] {$R_{245}$};
	\end{tikzpicture}}
	\caption{Joining the relations in subtree rooted at $R_{23}$} \label{fig:light}
    \end{subfigure}
    \hfill
    \begin{subfigure}{0.30\textwidth}
    \centering
    \scalebox{0.77}{
		\begin{tikzpicture}
                \node (A) [draw=black, ellipse, align=center] {$x_2 {x}_4 x_5$};
                \node [draw=none, left of=A,xshift=-10pt] {$R^H_{245}$};
                \node (B) [draw=black, ellipse, below of=A] {${x}_1 {x}_2$};
                \node [draw=none, left of=B] {$R_{12}$};
                \draw[-] (A) edge node[draw=none] {} (B);
	\end{tikzpicture}}
	\caption{Applying~\autoref{lem:basic} with $R^H_{245}$ as root node.} \label{fig:lemonethree}
    \end{subfigure}
    \hfill
    \begin{subfigure}{0.30\textwidth}
	\centering
    \scalebox{0.77}{
		\begin{tikzpicture}
                \node (A) [draw=black, ellipse, align=center] {$x_1 {x}_2$};
                \node [draw=none, left of=A] {$R_{12}$};
                \node (B) [draw=black, ellipse, below of=A] {${x}_2 {x}_4 x_5$};
                \node [draw=none, left of=B,xshift=-10pt] {$R^L_{245}$};
                \draw[-] (A) edge node[draw=none] {} (B);
	\end{tikzpicture}}
	\caption{Joining the remaining relations.}  \label{fig:final}
    \end{subfigure}
     \caption{Evaluating the running example query using~\autoref{alg:our}. Each figure shows a rooted join tree.} 
     \label{fig:enter-label}
 \end{figure}

\eat{\shaleen{ \tiny Hangdong: I think line 5 should have a queue instead of a stack and line 25 should push $s$ to head. We previously had a stack instead of a queue and just push on line 26. I don't think that works (a stack of post order traversal will have the root at the head). This needs to be checked.}}

 \begin{example}
     We use the query $Q_1$ shown in~\autoref{fig:inputcc} as an example to demonstrate the execution of~\autoref{alg:our}. {The join tree will be visited in the order $K = \{\chi^{-1}(\bx_{34}), \chi^{-1}(\bx_{25}), \chi^{-1}(\bx_{23}), \chi^{-1}(\bx_{12})\}$.}
     We first visit node for $R_{34}$ and since it is a leaf node, we apply~\autoref{lem:basic} with the root node as the heavy partition $R^{H}_{34}$ as shown in \autoref{fig:lemoneone}. Once the heavy partition has been processed, we replace $R^\mD_{34}$ with $R^{\mD,L}_{34}$ in the input database (\autoref{line:end}), which will be used in all subsequent iterations of the algorithm. Next, we process leaf node $R_{25}$ by again calling \autoref{lem:basic} with heavy partition $R^{H}_{25}$ as the root. $R^\mD_{25}$ is then replaced with $R^{\mD,L}_{25}$ in $\mD$.

     Now, both leaf nodes have their relations replaced by the light partitions. When we process node $R_{23}$, a non-leaf relation, we join all relations in the subtree rooted at $R_{23}$ (shown in dashed rectangle in \autoref{fig:light}). Thus, the query $Q'(\bx_{245}) \leftarrow R^L_{34} \wedge R^L_{25} \wedge R_{23}$ is evaluated, the variables in the bag for the node are replaced with $\bx_{245}$ and the relation is the output of the query $Q'(\bx_{245})$. Leaf nodes $R^L_{25}$ and $R^L_{34}$ are deleted, and $R_{245}$ becomes a leaf node. {$\chi^{-1}(\bx_{245})$ is added to the front of $K$ on line~\ref{push}.} Therefore, in the next iteration, we take the heavy partition of node  $R^{H}_{245}$ and apply \autoref{lem:basic}. Finally, we visit the node for $R_{12}$, process the join $Q''(\bx_{145}) \leftarrow R_{12} \wedge R^L_{245}$ (\autoref{fig:final}) and the root node bag is modified to $\bx_{145}$ with relation as the result $R^\mD_{145} = Q''(\mD)$. Node $R^L_{245}$ is deleted. At this point $K = \emptyset$, and we union $Q''(\bx_{145})$ and $\mJ$ on \autoref{union2}. Since the entire tree is now processed, the while loop terminates, and the final result $\mJ$ is returned.
 \end{example}

 \smallskip
 \introparagraph{Finding the optimal threshold $\Delta$} To find the optimal threshold that minimizes the running time of \autoref{alg:our}, we can equate the two terms in the running time expression of \autoref{thm:main} to obtain $\Delta = |\tOUT|^{1/k}$, giving us the running time as $O(|\mD| \cdot |\tOUT|^{1 - 1/k})$. However, the value of $|\tOUT|$ is not known apriori. To remedy this issue, we use the \emph{doubling trick}~\cite{auer1995gambling} that was first introduced in the context of multi-armed bandit algorithms. The key idea is to guess the value of $|\tOUT|$. {Suppose the guessed output size is $O$ and let $\alpha$ be a constant value that is an upper bound of the constant hidden in the big-O runtime complexity of~\autoref{alg:our}. We start with an estimate of $O_1 = 2^0$ in the first round and run the algorithm. If the algorithm does not finish execution in $\alpha \cdot |\mD| \cdot O_i^{1 - 1/k}$ steps, then we terminate the algorithm and pick the new estimate to be $O_{i+1} = 2 \cdot O_{i}$ and re-run the algorithm. However, if the algorithm finishes, then we have successfully computed the query result. Note that $\alpha$ can be determined by doing an analysis of the program and counting the number of RAM model operations required for each line. It is easy to see that the algorithm will terminate within \revtwo{$\lceil \log (2 \cdot |\tOUT|) \rceil$} rounds and the total running time is $\alpha \cdot |\mD| \cdot \sum_{i \in [\lceil \log_2 (2 \cdot |\tOUT|) \rceil]} O_i^{1 - 1/k} = \alpha \cdot |\mD| \cdot \sum_{i \in [ \lceil \log_2 (2 \cdot |\tOUT|) \rceil]} 2^{i \cdot (1 - 1/k)} = O(|\mD| \cdot |\tOUT|^{1 - 1/k})$ for any $k \geq 2$.} Formally:
 
  \begin{theorem} \label{thm:ecr}
    Given a reduced and existentially connected acyclic query $Q(\bx_\mF)$, and a database $\mD$, we can compute $Q(\mD)$ in time $O(|\mD| + |\mD| \cdot |\tOUT|^{1 - 1/k})$, where $k$ is the number of atoms in $Q$.
 \end{theorem}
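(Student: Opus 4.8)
The plan is to combine \autoref{thm:main} with the doubling trick to remove the dependence on knowing $|\tOUT|$ in advance. \autoref{thm:main} already establishes that for any fixed threshold $1 \leq \Delta \leq |\mD|$, \autoref{alg:our} runs in time $O(|\mD| \cdot \Delta^{k-1} + |\mD| \cdot |\tOUT|/\Delta)$. The first step is to observe that if we knew $|\tOUT|$, we could balance the two terms: setting $\Delta^{k-1} = |\tOUT|/\Delta$ gives $\Delta^k = |\tOUT|$, i.e., $\Delta = |\tOUT|^{1/k}$, and plugging this back yields a running time of $O(|\mD| \cdot |\tOUT|^{1-1/k})$. Adding the unavoidable $O(|\mD|)$ term for reading the input and applying the initial full reducer gives the claimed bound. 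The only subtlety here is verifying $\Delta = |\tOUT|^{1/k}$ respects the constraint $1 \leq \Delta \leq |\mD|$; since $|\tOUT| \leq |\mD|^k$ always holds (the output is a projection of a join of $k$ relations, each of size at most $|\mD|$), we have $|\tOUT|^{1/k} \leq |\mD|$, so the threshold is feasible.

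The substance of the proof is handling the fact that $|\tOUT|$ is unknown a priori, which is where the doubling trick comes in. I would fix a constant $\alpha$ that upper-bounds the hidden constant in the big-$O$ of \autoref{thm:main} (obtainable by a static count of RAM operations per line of \autoref{alg:our}). Then I would run the algorithm in rounds indexed by $i = 1, 2, \ldots$, where in round $i$ we guess $T_i = 2^{i-1}$, set $\Delta = T_i^{1/k}$, and run \autoref{alg:our} with a step budget of $\alpha \cdot |\mD| \cdot T_i^{1-1/k}$. If the algorithm exceeds this budget we abort and double the guess; if it completes within budget, the guess was large enough that \autoref{thm:main} guarantees completion, so the output is correct and we return it. The key correctness observation is that once $T_i \geq |\tOUT|$, the bound from \autoref{thm:main} with $\Delta = T_i^{1/k}$ is at most $\alpha \cdot |\mD| \cdot T_i^{1-1/k}$ (using $|\tOUT|/\Delta \leq T_i/\Delta = T_i^{1-1/k}$), so the round succeeds. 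Hence the process halts no later than the first round with $T_i \geq |\tOUT|$, i.e., within $\lceil \log_2(2|\tOUT|) \rceil$ rounds.

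The final step is to sum the cost across all rounds. Each round $i$ costs $O(|\mD| \cdot T_i^{1-1/k}) = O(|\mD| \cdot 2^{(i-1)(1-1/k)})$, and summing over $i \in [\lceil \log_2(2|\tOUT|) \rceil]$ yields a geometric series $\sum_i 2^{i(1-1/k)}$ dominated by its last term, which is $O(2^{\log_2(2|\tOUT|)\,(1-1/k)}) = O(|\tOUT|^{1-1/k})$. This geometric domination requires $1 - 1/k > 0$, i.e., $k \geq 2$, which holds for any nontrivial query; for $k = 1$ the query is trivially free-connex and evaluable in $O(|\mD| + |\tOUT|)$ time directly. Multiplying through by $|\mD|$ and adding the $O(|\mD|)$ preprocessing term gives the claimed $O(|\mD| + |\mD| \cdot |\tOUT|^{1-1/k})$.

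I expect the main obstacle to be stating the doubling argument cleanly rather than proving anything deep: one must be careful that aborted rounds are genuinely charged to the step budget (so that the RAM-operation count, not the a priori unknown true running time, governs the termination decision), and that the constant $\alpha$ is chosen uniformly so the same budget formula applies in every round. Everything else is a routine geometric-series calculation on top of the already-established \autoref{thm:main}.
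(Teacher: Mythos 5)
Your proposal is correct and takes essentially the same route as the paper: balance the two terms of \autoref{thm:main} to obtain $\Delta = |\tOUT|^{1/k}$ and the target bound, then handle the unknown $|\tOUT|$ via the doubling trick with per-round step budget $\alpha \cdot |\mD| \cdot T_i^{1-1/k}$ and a geometric-series summation over rounds. Your extra touches (checking $\Delta \leq |\mD|$ via $|\tOUT| \leq |\mD|^k$, the $k=1$ edge case, and insisting that aborted rounds are charged to the step budget rather than the true runtime) are refinements of details the paper leaves implicit, not a different argument.
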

 
 We note that the doubling trick argument for join evaluation has been used in prior works~\cite{deng2018overlap, amossen2009faster}. This idea can also be applied to the results in~\cite{hu2024fast}, allowing us to shave the polylog factors in the total running time and removing the need to estimate the output size via sophisticated algorithms.

 \smallskip
 \introparagraph{General CQs} Finally, we discuss what happens for a general acyclic CQ that may not be reduced or existentially connected. In this case, we can relate the runtime of the algorithm to the projection width we defined in the previous section. \revthree{The main insight here is that once a general acyclic CQ has been reduced and decomposed, we can evaluate each component separately. The query evaluation output of each component can be combined easily since the query can now be viewed as a free-connex acyclic query, whose evaluation is well understood~\cite{bagan2007acyclic}.}
 Note that for a CQ that is reduced and existentially connected, $\wout$ is exactly the number of atoms in the query.

\begin{theoremrep} \label{thm:all}
Given an acyclic CQ $Q$ and a database $\mD$, we can compute the output $Q(\mD)$ in time $O(|\mD| + |\tOUT|+|\mD| \cdot |\tOUT|^{1 - 1/\wout(Q)})$.
\end{theoremrep}
\begin{proof}
This result follows exactly the argument in~\cite{hu2024fast}, which shows that the runtime of an evaluation algorithm for $Q$ can be reduced to the computation of each query in $\textsf{decomp}(\textsf{red}[Q])$. In particular, as a first step it can be shown that we can compute in linear time an instance $\mD'$ such that $Q(\mD) \leftarrow \textsf{red}[Q](\mD')$; this is done by doing semijoins in the same order as the GYO algorithm.  

As a second step, we use~\autoref{thm:ecr} to compute each query $Q_i \in \textsf{decomp}(\textsf{red}[Q])$ in time $O(|\mD| + |\mD| \cdot |Q_i(\mD)|^{1 - 1/k_i})$, where $k_i$ is the number of atoms in $Q_i$. Then, we materialize each query and compute $Q(\mD)$ by doing the join $\bigwedge_i Q_i(\mD)$. This join query corresponds to a free-connex acyclic CQ, so it can be evaluated in time $O(|\mD| + \sum_i |Q_i(\mD)| + |Q(\mD)|) = O(|\mD| + |\tOUT|)$. The desired claim follows from the fact that projection width is defined as the maximum number of atoms in any query $Q_i$ of the decomposition.
\end{proof}

\eat{
  \begin{algorithm}[!ht]
    \DontPrintSemicolon 
    \SetKwInOut{Input}{Input}
    \SetKwInOut{Output}{Output}
    \Input{acyclic hypergraph $\mH$, free variables $\mF$, Database $\mD$}
    \Output{Query result}
    \SetKwFunction{subquery}{\textsc{processSubquery}}
    \SetKwProg{myproc}{\textsc{procedure}}{}{}
    \SetKwData{ret}{\textbf{return}}
    $\mD \leftarrow $ \textbf{apply} a full reducer for $\mD$ \;
    \ForEach{$Q \in \textsf{decomp}(\textsf{red}[Q])$}{
        $R^C(\mF_C) \leftarrow Q^C(D)$ \label{join:cc}
    }
    \ret $\pi_{\mF} (\wedge_{C \in \{C_1, \dots, C_\ell\}} R^C(\mF_C))$ \label{line:full} \tcc*{Query is a full acyclic CQ}
    \caption{CQ Evaluation} \label{alg:overall}
 \end{algorithm}
 }

\revtwo{
 \introparagraph{Self-Joins} So far, we have assumed that the query does not contain any repeated relations (i.e. no self-joins). However, our framework can handle self-joins as well by performing a few basic transformations. First, if a query contains repeated relations, we make copies of the input relation(s) in the database involved in the self-join, assign a unique relational name to each copy, and use the unique name for each occurrence of the repeated relation to rewrite the query. Then, we order the schema of each relation according to the variable order $[n]$. This operation is straightforward since reordering of the variables in the schema of a relation merely corresponds to shuffling each tuple in the relation to match the reordered schema. Finally, for all atoms $R_J(\bx_J), S_J(\bx_J), \dots, V_J(\bx_J)$ that have the same schema, we only keep one atom in query (say $R_J(\bx_J)$) and modify the instance $R^D_J = R^D_J \cap S^D_J \cap \dots \cap V^D_J$. Each step takes at most $O(|\mD|)$ time and satisfies the formulation of a CQ as defined in \looseness=-1 \autoref{eqn:conjunctive:query}, and thus our main result can extend to self-joins.

 \begin{example}
     Consider the query $Q(x_1, x_2, x_4) \leftarrow R(x_1, x_2) \wedge R(x_2, x_1) \wedge S(x_2, x_3) \wedge S(x_3, x_4)$. The query contains a self-join on both $R$ and $S$. Therefore, we first create two copies of relation $R$: $R_1(x_1, x_2)$ and $R_2(x_2, x_1)$; and two copies of $S$: $S_1(x_2, x_3)$ and $S_2(x_3, x_4)$. The rewritten query becomes $Q(x_1, x_2, x_4) \leftarrow R_1(x_1, x_2) \wedge R_2(x_2, x_1) \wedge S_1(x_2, x_3) \wedge S_2(x_3, x_4)$. Next, we modify the schema of relation $R_2$ to $R_2(x_1, x_2)$ to order the variables in relation $R^\mD_2(x_1, x_2)$. Finally, since $R_1$ and $R_2$ have the same schema, we compute $R^\mD_1 = R^\mD_1 \cap R^\mD_2$ and discard $R_b$. The final rewritten query is $Q(x_1, x_2, x_4) \leftarrow R_1(x_1, x_2) \wedge S_1(x_2, x_3) \wedge S_2(x_3, x_4)$.
 \end{example}
 
 Repeated variables in the schema of a relation can also be handled in a linear time preprocessing step by modifying the schema to only have one occurrence of each variable and modifying each tuple in the relational instance.
 }

\section{Extension to Aggregation} \label{sec:agg}

\introparagraph{Semirings}  A tuple $\bS = (\boldsymbol{D}, \oplus, \otimes, \mathbf{0}, \mathbf{1})$ is a (commutative) \textit{semiring} if $\oplus$ and $\otimes$ are binary operators over $\boldsymbol{D}$ for which:
\begin{enumerate}
    \item $(\boldsymbol{D}, \oplus, \zerobf)$ is a commutative monoid with additive 
    identity $\zerobf$ (i.e., $\oplus$ is associative and commutative, and $a \oplus 0 = a$ for all $a \in D$);
    \item $(\boldsymbol{D}, \otimes, \onebf)$ is a (commutative) monoid with multiplicative 
    identity $\onebf$ for $\otimes$;
    \item $\otimes$ distributes over $\oplus$, i.e., $a \otimes (b \oplus c) = (a \otimes b) \oplus (a \otimes c)$ for $a, b, c \in \boldsymbol{D}$; and 
    \item $a \otimes \zerobf = \zerobf$ for all $a \in \boldsymbol{D}$.
\end{enumerate}
Such examples include the Boolean semiring $\mathbb{B}$ = $(\{ \textsf{false}, \textsf{true} \} , \vee, \wedge, \textsf{false}, \textsf{true})$, the natural numbers semiring $(\mathbb{N}, +, \cdot, 0, 1)$, and the \textit{tropical semiring} $\mathsf{Trop}^{+}=\left(\mathbb{R}_{+} \cup\{\infty\}, \min ,+, \infty, 0\right)$.

\introparagraph{Functional Aggregate Queries}
Green et al.~\cite{Green07} developed the idea of using annotations of a semiring to reason about provenance over natural joins. That is, every relation $R_J$ is now a $\bS$-relation, i.e., each tuple in $R_J$ is annotated by an element from the domain $\boldsymbol{D}$ of $\bS$. Tuples not in the relation are annotated by $\zerobf \in \boldsymbol{D}$ implicitly. Standard relations are essentially $\mathbb{B}$-relations. Abo Khamis et al.~\cite{FAQ16} introduced the functional aggregate queries ($\faq$) that express join-aggregate queries via semiring annotations. A $\faq$ $\varphi$ (over a semiring $\bS$) is the following:
\begin{equation} \label{eq:sumprod-rule}
     \varphi(\bx_\mF) \leftarrow \bigoplus_{\bx_{[n] \setminus \mF}} \bigotimes_{J \in \mE} R_J(\bx_J),
\end{equation}
where $(i)$ $([n], \mE)$ is the \textit{associated hypergraph} of $\varphi$ (the hyperedges $\mathcal{E} \subseteq 2^{[n]}$), $(ii)$ $\bx_\mF$ are the head variables ($\mF \subseteq [n]$), and $(iii)$ each $R_J$ is an input $\bS$-relation of schema $\bx_J$ and we use $\mD = (R^\mD_J)_{J \in \mE}$ to denote an input database instance. The acyclicity notion for $\faq$s is identical to $\cq$s, through its associated hypergraph. Similar to $Q(\mD)$, we use $\varphi(\mD)$ to denote the query result of $\varphi(\bx_\mF)$ evaluated with input $\mD$, i.e. the resulting $\bS$-relation of schema $\bx_\mF$. A modification of Yannakakis algorithm (Algorithm~\ref{alg:yann}) can handle acyclic $\faq$s ~\cite{yannakakis1981algorithms}, by lifting the natural joins at line~\ref{line:j} to multiplications over the semiring domains, i.e.,
\begin{align} \label{eq:semiring-join}
    T^\mD_{\chi(s) \cup F_s} = R^\mD_{\chi(s)} (\bx_{\chi(s)})\otimes \bigoplus_{\chi(t) \setminus (\mF_t \cup (\chi(s) \cap \chi(t)))}T^\mD_{\chi(t)}(\bx_{\chi(t)}).
\end{align}


In this following, we show that our algorithm (\autoref{alg:our}) can be similarly extended to evaluate acyclic $\faq$s. We obtain the following result.

 \begin{theorem} \label{thm:faq-main}
     Given an acyclic $\faq$ query $\varphi(\bx_\mF)$ over a semiring $\bS$, database $\mD$, we can compute the output $\varphi(\mD)$ in time $O(|\mD| + |\tOUT|+|\mD| \cdot |\tOUT|^{1 - 1/\wout(Q)})$.
 \end{theorem}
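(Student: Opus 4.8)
The plan is to mirror the proof of \autoref{thm:all} step by step, with the single new ingredient being that every operation must now propagate semiring annotations rather than mere set membership. First I would observe that the structural machinery behind \autoref{lem:basic}, \autoref{thm:main}, and \autoref{thm:ecr} is driven entirely by relation \emph{supports}: the degree-product bounds, the heavy/light partitioning, and the join-tree reorientation all concern which tuples are nonzero, never their annotation values. The two places where the $\cq$ algorithm takes a set union---splitting a relation into heavy and light parts, and combining the results of the connected components---are precisely the places where distributivity of $\otimes$ over $\oplus$ lets us replace union by the semiring sum $\oplus$.

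Next I would verify that \autoref{lem:basic} and \autoref{thm:main} hold verbatim once the join at \autoref{line:j}/\autoref{line:join} is read as the semiring join of \eqref{eq:semiring-join}. The key point is that the schema of every intermediate $\bS$-relation coincides with the schema produced by the projecting Yannakakis run: the variables aggregated out by $\bigoplus$ in \eqref{eq:semiring-join} are exactly the variables projected away by $\pi$ in \autoref{line:j}. Moreover the support of $\bigoplus_{y} T(\bx,y)$ is contained in $\pi_{\bx}(\mathrm{supp}(T))$, since possible cancellation in the semiring can only shrink the support. Hence the intermediate-size bound $|\tOUT|\cdot|R_{\chi(s)}|/\Delta$ of \autoref{lem:basic} remains a valid upper bound on the number of nonzero tuples materialized, and because each $\otimes$ and $\oplus$ costs $O(1)$ in the RAM model, the running-time accounting of \autoref{thm:main} is unchanged. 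Rerooting the tree at a leaf is sound because the semiring Yannakakis of \eqref{eq:semiring-join} computes the correct answer for \emph{any} rooted join tree.

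For correctness of the split itself I would argue as follows. Partitioning $T_{\chi(s)}$ into disjoint supports $T^H_{\chi(s)}$ and $T^L_{\chi(s)}$ means that, as $\bS$-relations, $T_{\chi(s)} = T^H_{\chi(s)} \oplus T^L_{\chi(s)}$, since each tuple carries its annotation in exactly one part and $\zerobf$ in the other. By distributivity the full $\faq$ splits as $\varphi = \varphi^H \oplus \varphi^L$, where $\varphi^H,\varphi^L$ substitute the heavy, respectively light, part. Consequently the unions on \autoref{union1} and \autoref{union2} become $\oplus$-aggregations into $\mathcal{J}$, and the recursion computes $\varphi(\mD)$ correctly. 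Applying the doubling trick exactly as in \autoref{thm:ecr}---whose running-time expression is identical---then yields $O(|\mD| + |\mD|\cdot|\tOUT|^{1-1/k})$ for a reduced, existentially connected acyclic $\faq$ with $k$ atoms.

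Finally I would lift this to arbitrary acyclic $\faq$s via semiring analogues of reduction and decomposition. The GYO reduction becomes annotation-aware: an isolated non-free variable $x$ is eliminated by the linear-time aggregation $R'_J(\bx_{J\setminus\{x\}}) \leftarrow \bigoplus_{x} R_J(\bx_J)$, a contained edge $e\subseteq f$ is absorbed by the linear-time product $R'_f \leftarrow R_f \otimes R_e$, and a semiring-aware full reducer discards only tuples that would be multiplied by $\zerobf$; this produces in linear time an instance $\mD'$ with $\varphi(\mD)=\textsf{red}[Q](\mD')$ read over $\bS$. Every non-free variable lies inside a single connected component of $G^\exists_Q$, so $\bigoplus_{\bx_{[n]\setminus\mF}}$ factors, by distributivity, into component-local aggregations $\varphi_i(\bx_{\mF_i})$. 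Computing each $\varphi_i$ by the previous paragraph in time $O(|\mD| + |\mD|\cdot|\varphi_i(\mD)|^{1-1/k_i})$ and combining them through $\bigotimes_i \varphi_i$---a free-connex acyclic $\faq$ evaluable in $O(|\mD| + \sum_i |\varphi_i(\mD)| + |\tOUT|) = O(|\mD| + |\tOUT|)$---gives the claimed bound, since $\wout(Q)$ is the maximum $k_i$. The main obstacle is the support argument of the second paragraph: one must check carefully that aggregating variables out never enlarges the support beyond the projected $\cq$ relation, so that the tighter materialization bound of \autoref{lem:basic} is genuinely preserved in the annotated setting.
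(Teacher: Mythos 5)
Your proposal is correct and takes essentially the same route as the paper: the paper also proves \autoref{thm:faq-main} by augmenting \autoref{alg:our} with the semiring join of \eqref{eq:semiring-join} at \autoref{line:join}, replacing the unions at \autoref{union1} and \autoref{union2} by $\oplus$, arguing correctness from the disjointness of the heavy/light split via $T_{\chi(s)} = T^H_{\chi(s)} \oplus T^L_{\chi(s)}$ and distributivity, and carrying the runtime over unchanged from \autoref{thm:main}. The only difference is one of detail, not of approach: you additionally spell out the semiring analogues of the reduction and decomposition steps (annotation-aware GYO via $\bigoplus_x$ and $\otimes$, and the component-wise factoring of the aggregation), which the paper leaves implicit.
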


The algorithm needs three simple augmentations from \autoref{alg:our}. First, for processing a heavy leaf at line~\ref{line:leaf}, call the modified Yannakakis algorithm for the sub-$\faq$. Second, use~\eqref{eq:semiring-join} again for line~\ref{line:join} instead of the natural joins. Lastly, we replace the unions for line~\ref{union1} and~\ref{union2} by $\oplus$ to aggregate back the query results of $\varphi(\bx_\mF)$. The correctness of this algorithm stems from the disjoint partitions of relations corresponding to each leaf on line~\ref{disjoint} of \autoref{alg:our}. In other words, it holds that $T^\mD_{\chi(s)} = T_{\chi(s)}^{\mD,H} \oplus T_{\chi(s)}^{\mD,L}$ and by distributivity, we have
$$
\varphi(\mD) = \varphi \left((\mD \setminus T^\mD_{\chi(s)} ) \cup T_{\chi(s)}^{\mD,H} \right) \oplus \varphi \left((\mD \setminus T^\mD_{\chi(s)}) \cup T_{\chi(s)}^{\mD,L} \right),
$$
where the first sub-query is evaluated upfront at line~\ref{call:lemone}. For the latter sub-query, if the leaf $s$ is the last leaf of its parent being processed, the sub-query is directly evaluated in the next for-loop iteration at its parent level (i.e. the else branch at line~\ref{line:nonleaf}). Otherwise, the relations in the database $(\mD \setminus T^\mD_{\chi(s)}) \cup T_{\chi(s)}^{\mD, L}$ will be further partitioned by the next sibling of $s$ in the post-order traversal. The runtime argument follows exactly from the proof of Theorem~\ref{thm:main}.


\section{Applications}

In this section, we we apply our framework to recover state-of-the-art results, as well as obtain new results, for queries of practical interest.

\subsection{Path Queries}

We will first study the \emph{path queries}, a class of queries that has immense practical importance. The projection width of a path query $P_k(x_1, x_{k+1})$ is $k$. Therefore, applying our main result, we get:

\begin{theorem} \label{thm:path}
    Given a path query $P_k(x_1, x_{k+1})$ and a database $\mD$, there exists an algorithm that can evaluate the path query in time $O(\revtwo{|\mD|} + |\mD| \cdot |\tOUT|^{1-1/k})$.
\end{theorem}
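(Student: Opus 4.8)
The plan is to obtain this theorem as an immediate corollary of the general machinery for reduced, existentially connected queries, namely \autoref{thm:ecr} (equivalently \autoref{thm:all} specialized to a query whose projection width equals its number of atoms). The only genuine work is a structural verification that $P_k$ lies in the scope of that machinery and that $\wout(P_k)=k$; once this is in place the runtime bound is obtained by plugging in, and the single cosmetic step is absorbing the additive linear term into the multiplicative one.

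First I would check that $P_k$ is reduced, i.e.\ $P_k = \textsf{red}[P_k]$. The free variables are $x_1$ and $x_{k+1}$, and these are the only variables occurring in a single atom; every internal variable $x_2,\dots,x_k$ occurs in two consecutive atoms and is therefore a join variable. Hence the while-loop of \autoref{alg:measure} finds no isolated \emph{non-free} variable to delete. Moreover, the $k$ hyperedges $\{x_i,x_{i+1}\}$ are pairwise incomparable, so no hyperedge is removed on \autoref{line:remove:edge} either. Thus the reduction leaves $P_k$ unchanged. (As a sanity check, the join tree of $P_k$ is itself a path whose two endpoints $R_{12}$ and $R_{k,k+1}$ carry the isolated free variables $x_1$ and $x_{k+1}$, consistent with Proposition~\ref{prop:leaffree}.)

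Second I would verify that $P_k$ is existentially connected and has projection width $k$. In the graph $G^\exists_{P_k}$ there is one vertex per atom, and the atoms $R_{i,i+1}$ and $R_{i+1,i+2}$ are joined by an edge because they share the variable $x_{i+1}$, which is non-free for $2 \le i+1 \le k$. Consequently $G^\exists_{P_k}$ is a path on $k$ vertices, hence connected, so $\textsf{decomp}(\textsf{red}[P_k])$ consists of a single query comprising all $k$ atoms. By Definition~\ref{def:reduced} this yields $\wout(P_k)=k$, matching the value asserted in the text preceding the theorem.

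Finally I would invoke \autoref{thm:ecr}: since $P_k$ is reduced, existentially connected, and has $k$ atoms, the algorithm evaluates it in time $O(|\mD| + |\mD|\cdot|\tOUT|^{1-1/k})$. It remains to fold the additive linear term into the multiplicative one. Whenever the output is nonempty we have $|\tOUT|^{1-1/k} \ge 1$ (using $k \ge 2$), so $|\mD| \le |\mD|\cdot|\tOUT|^{1-1/k}$ and the stated bound $O(|\mD|\cdot|\tOUT|^{1-1/k})$ follows; the empty-output case is already detected by the initial full reducer in $O(|\mD|)$ time. I do not expect a real obstacle here, since all the substance resides in \autoref{lem:basic} and \autoref{thm:main}/\autoref{thm:ecr}; the only things to get right are the structural bookkeeping that $P_k$ is a reduced, existentially connected instance of projection width exactly $k$, and the observation that the linear term is dominated.
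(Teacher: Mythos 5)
Your proposal is correct and follows essentially the same route as the paper, which simply observes that $\wout(P_k)=k$ and invokes the main result (\autoref{thm:ecr}/\autoref{thm:all}); your explicit verification that $P_k$ is reduced and existentially connected, and your handling of the additive $O(|\mD|)$ term, just spell out details the paper leaves implicit.
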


For $k=2$, our result matches the bound shown in~\cite{amossen2009faster}. Comparing our result to Hu, the bound obtained in~\autoref{thm:path} strictly improves the results obtained by Hu for $3 \leq k \leq 5$ and matches for $k=6$ even if we assume $\omega=2$. This result suggests there is room for further improvement in the use of fast matrix multiplication to obtain tighter bounds. For $k=7$, our running time is better than the one obtained by Hu assuming the current best known value of $\omega = 2.371552$~\cite{williams2024new}. 

\subsection{Hierarchical Queries}

We show here the application of our result to \emph{hierarchical queries}.  A CQ is {\em hierarchical} if for any two of its variables, either their sets of atoms are disjoint or one is contained in the other. All hierarchical queries are acyclic, and all star queries $Q^\star_\ell$ (defined in \autoref{ex:star}) are hierarchical queries. They have $\wout(Q^\star_\ell) = \ell$, thus:

\begin{theorem} \label{thm:star}
    Given a star query $Q^\star_\ell(x_{\ell})$ and a database $\mD$, the star query can be evaluated in time $O(|\mD|+ |\mD| \cdot |\tOUT|^{1-1/\ell})$.
\end{theorem}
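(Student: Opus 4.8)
The plan is to reduce the claim entirely to the machinery already in place, namely Theorem~\ref{thm:ecr} (equivalently Theorem~\ref{thm:all}), by checking that the star query satisfies the structural hypotheses and that its projection width is exactly $\ell$. Since all quantitative work is inherited from Theorem~\ref{thm:ecr}, I expect no genuine difficulty here; the only care needed is in verifying reducedness and existential-connectivity, and in reconciling the additive output term with the precise form of the stated bound.

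First I would recall the shape of the query, $Q^\star_\ell(x_1,\dots,x_\ell) \cd \bigwedge_{i\in[\ell]} R_i(x_i,y)$, and exhibit a concrete join tree witnessing $\alpha$-acyclicity: take any atom, say $R_1$, as the root and attach $R_2,\dots,R_\ell$ as its children, each tree edge carrying the shared variable $y$. The running-intersection property is immediate, since $y$ lies in every bag (so its bags form a connected subtree, indeed the whole tree) while each $x_i$ occurs in the single bag $R_i$. This also confirms that $Q^\star_\ell$ is hierarchical, as asserted in the statement.

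Next I would verify that $Q^\star_\ell$ is already reduced and existentially connected, following the example after Definition~\ref{def:reduced}. The variable $y$ is a join variable, occurring in all $\ell$ atoms, so Algorithm~\ref{alg:measure} never removes it; each $x_i$ is isolated but \emph{free}, hence also not removable; and no hyperedge $\{x_i,y\}$ is contained in another because the $x_i$ are pairwise distinct, so no edge-removal step fires. Thus $\textsf{red}[Q^\star_\ell]=Q^\star_\ell$. Moreover $G^\exists_{Q^\star_\ell}$ has all $\ell$ atoms pairwise adjacent through the non-free variable $y$, so it is a single connected component and the query is existentially connected. Consequently the decomposition is the single query $Q^\star_\ell$ with $\ell$ atoms, giving $\wout(Q^\star_\ell)=\ell$.

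Applying Theorem~\ref{thm:ecr} with $k=\ell$ then yields running time $O(|\mD| + |\mD|\cdot|\tOUT|^{1-1/\ell})$. The final step I would carry out is to match this to the stated bound by absorbing the additive output term: since the output ranges over $(x_1,\dots,x_\ell)$ and each $R_i$ contributes at most $|\mD|$ distinct values of $x_i$, we have $|\tOUT|\le |\mD|^\ell$, hence $|\tOUT|^{1/\ell}\le|\mD|$ and therefore $|\tOUT|\le |\mD|\cdot|\tOUT|^{1-1/\ell}$; so the $|\tOUT|$ term of the general bound in Theorem~\ref{thm:all} is dominated and can be dropped. The main obstacle, such as it is, lies solely in the correct verification of the reduced and existentially-connected hypotheses—once $\wout(Q^\star_\ell)=\ell$ is established, the bound is automatic.
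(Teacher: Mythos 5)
Your proposal is correct and follows essentially the same route as the paper: the paper establishes $\wout(Q^\star_\ell)=\ell$ (via the example showing $\textsf{red}[Q^\star_\ell]=Q^\star_\ell$ and that all atoms lie in one existentially connected component through $y$) and then invokes the main theorem, exactly as you do. Your additional verifications—the explicit join tree, the reducedness/connectivity check, and absorbing the additive $|\tOUT|$ term using $|\tOUT|\le|\mD|^\ell$—are correct details that the paper leaves implicit.
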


\autoref{thm:star} recovers the bound from~\cite{amossen2009faster} for star queries and thus, provides an alternate proof of the combinatorial result in~\cite{amossen2009faster} for star queries. We note that for star queries the results merely improve the \emph{analysis} of the Yannakakis algorithm. In other words, Yannakakis algorithm also achieves the time bound as specified by \autoref{thm:star} but the routinely used upper bound of $O(|\mD| \cdot |\tOUT|)$ does not reflect that.

\begin{figure}
    \centering
    \scalebox{0.9}{
    \begin{tikzpicture}[xscale=1.15, yscale=1]
      \node at (0.0, 0.0) (A) {{\small \color{black} $x_1$}};
      \node at (-1.8, -0.8) (B) {{\small\color{black} $x_2$}} edge[-] (A);
      \node at (1.8, -0.8) (C) {{\small\color{black} $x_3$}} edge[-] (A);
      \node at (-2.7, -1.6) (D) {{\small\color{black} $x_4$}} edge[-] (B);
      \node at (-1, -1.6) (E) {{\small\color{black} $x_5$}} edge[-] (B);
      \node at (0.9, -1.6) (F) {{\small\color{black} $x_6$}} edge[-] (C);
      \node at (2.7, -1.6) (G) {{\small\color{black} $x_7$}} edge[-] (C);
      \node at (-2.7, -2.4) (R) {{\small \color{black} $R_{124}(\bx_{124})$}} edge[-] (D);
      \node at (-1, -2.4) (S) {{\small \color{black} $R_{125}(\bx_{125})$}} edge[-] (E);
      \node at (0.9,  -2.4)(T) {{\small \color{black} $R_{136}(\bx_{136})$}} edge[-] (F);
      \node at (2.7,  -2.4)(U) {{\small \color{black} $R_{137}(\bx_{137})$}} edge[-] (G);
    \end{tikzpicture} }
    \caption{Relations formed by variables arranged as a complete binary tree. Every root-to-leaf path forms a relation (labeled).} 
    \label{fig:hierarchy}
\end{figure}

As another example, consider the query $\revthree{Q(\bx_{4567})} \leftarrow R_{124}(\bx_{124}) \wedge R_{125}(\bx_{125}) \wedge R_{136}(\bx_{136}) \wedge R_{137}(\bx_{137})$ formed by the relations shown in \autoref{fig:hierarchy}. For this query, the projection width is four and thus, we obtain an evaluation time of $O(|\mD|+|\mD| \cdot |\tOUT|^{3/4})$.~\cite{kara2020trade} proposed an algorithm for enumerating the results of any hierarchical query (not necessarily full) with delay\footnote{The delay of enumerating query results refers to the upper bound on the time between outputting any two consecutive output tuples (including from start of the algorithm to the first tuple, and the last tuple to the end of the algorithm).} guarantees after preprocessing the input. \revthree{In particular, they showed that after preprocessing time $T_P = O(|\mD|^{1 + (\mw - 1) \cdot \epsilon})$, it is possible to enumerate the query result with delay $\delta = O(|\mD|^{1 - \epsilon})$, where $\mw$ is the \emph{static width} (a width parameter defined by~\cite{kara2020trade}) of a hierarchical query, for any $0 \leq \epsilon \leq 1$. Note that an algorithm with preprocessing $T_p$ and delay guarantee $\delta$ directly leads to a join evaluation algorithm that takes time $O(T_p + \delta \cdot |\tOUT|)$. For the example query $Q(\bx_{4567})$, it turns out that $\mw = 4$, and thus, the running time can be minimized for a suitable choice of threshold $\epsilon$ to also obtain the same running time that our algorithm achieves. A deeper exploration of this intriguing connection is a topic left for future research.}

\subsection{General Queries}

\introparagraph{Submodular width} A function $f : 2^{\nodes} \mapsto \mathbb{R}_+$ is a non-negative {\em set function} on $\nodes$ ($\ell \geq 1$).
The set function is {\em monotone} if $f(X) \leq f(Y)$ whenever $X \subseteq Y$, and is {\em submodular} if $f(X\cup Y)+f(X\cap Y)\leq f(X)+f(Y)$
for all $X,Y\subseteq \nodes$. A non-negative, monotone, submodular set function $h$ such that $h(\emptyset)=0$ is a \textit{polymatroid}. 
Let $Q$ be a CQ and let $\Gamma_{\ell}$ be the set of all polymatroids $h$ on $\nodes$ such that $h(J) \leq 1$ for all {$J \in \mE$}. The {\em submodular width} of $Q$ is
\begin{equation} \label{eq:subw}
    \subw(Q)  \defeq \; \max_{h \in \Gamma_{\ell}} \min_{(\mathcal{T}, \chi) \in \mathfrak{F}} \max_{t \in V(\mathcal{T})} h(\chi(t)),
\end{equation}
where $\mathfrak{F}$ is the set of all {\em non-redundant} tree decompositions of $Q$. A tree decomposition is {\em non-redundant} if no bag is a subset of another. {Abo Khamis et al.~\cite{PANDA} proved that non-redundancy ensures that $\mathfrak{F}$ is finite, hence the inner minimum is well-defined.}  Prior work~\cite{PANDA} showed that given any CQ $Q$ and database $\mD$, the \textsf{PANDA} algorithm can decompose the query and database instance into a constant number of pairs $(Q_i, \mD_i)$ such that $Q(D) \leftarrow \bigcup_i Q_i(\mD_i)$. Further, it is also guaranteed that each $Q_i$ is acyclic, $|\mD_i| = |\mD|^{\subw(Q)}$ (each $\mD_i$ can be computed in $\tilde{O}(|\mD|^{\subw(Q)})$ time), and $|Q_i(\mD_i)| \leq |Q(\mD)|$. Since each $Q_i$ is acyclic, we can apply our main result and obtain the following theorem. 

\begin{theorem} \label{thm:cyclic}
    Given a CQ $Q$ and database $\mD$, there exists an algorithm to evaluate $Q(\mD)$ in time $\tilde{O}(|\mD|^{\subw(Q)}+|\mD|^{\subw(Q)} \cdot |\tOUT|^{1-1/\max_i \wout(Q_i)})$, where $(Q_i, \mD_i)$ is the set of decomposed queries generated by $\mathsf{PANDA}$.
\end{theorem}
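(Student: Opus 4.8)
The plan is to combine the \textsf{PANDA} decomposition guarantee with \autoref{thm:all} in a straightforward compositional manner, treating \textsf{PANDA} as a black box. First I would invoke the cited result of~\cite{PANDA}: given $Q$ and $\mD$, in $\tilde{O}(|\mD|^{\subw(Q)})$ total time we obtain a constant number $c$ of pairs $(Q_i, \mD_i)$ such that $Q(\mD) = \bigcup_{i} Q_i(\mD_i)$, where each $Q_i$ is acyclic, each $|\mD_i| \le |\mD|^{\subw(Q)}$, and each $|Q_i(\mD_i)| \le |Q(\mD)| = |\tOUT|$. Because $c$ is a constant independent of the data, any per-query cost will be absorbed into the $\tilde{O}$ (or $O$) notation up to a constant factor.

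Next I would apply \autoref{thm:all} to each acyclic query $Q_i$ on its instance $\mD_i$. This yields a running time of $O(|\mD_i| + |Q_i(\mD_i)| + |\mD_i| \cdot |Q_i(\mD_i)|^{1 - 1/\wout(Q_i)})$ for evaluating the $i$-th piece. I would then bound each factor using the \textsf{PANDA} guarantees: substitute $|\mD_i| \le |\mD|^{\subw(Q)}$ and $|Q_i(\mD_i)| \le |\tOUT|$. Since the exponent $1 - 1/\wout(Q_i)$ lies in $[0,1)$ and $|\tOUT| \geq 1$, the map $x \mapsto x^{1-1/\wout(Q_i)}$ is monotone, so $|Q_i(\mD_i)|^{1 - 1/\wout(Q_i)} \le |\tOUT|^{1 - 1/\wout(Q_i)} \le |\tOUT|^{1 - 1/\max_i \wout(Q_i)}$, the last step again by monotonicity of the base since the exponent increases with $\wout(Q_i)$. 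Thus the cost of evaluating each $Q_i(\mD_i)$ is $\tilde{O}(|\mD|^{\subw(Q)} + |\mD|^{\subw(Q)} \cdot |\tOUT|^{1 - 1/\max_i \wout(Q_i)})$.

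Finally I would account for assembling the union $\bigcup_i Q_i(\mD_i)$. Since each $Q_i(\mD_i) \subseteq Q(\mD)$ and there are only $c$ of them, materializing and deduplicating the union costs $O(c \cdot |\tOUT|)$, which is dominated by the terms already present. Summing over the constant number of pieces, together with the $\tilde{O}(|\mD|^{\subw(Q)})$ preprocessing cost of \textsf{PANDA} itself, gives the claimed total bound $\tilde{O}(|\mD|^{\subw(Q)} + |\mD|^{\subw(Q)} \cdot |\tOUT|^{1 - 1/\max_i \wout(Q_i)})$.

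I do not anticipate a genuine mathematical obstacle here, since the heavy lifting is done by the two cited results; the proof is essentially a bookkeeping exercise in composing them. The one point requiring mild care is the monotonicity argument that lets us replace both $|Q_i(\mD_i)|$ by $|\tOUT|$ and the individual widths $\wout(Q_i)$ by their maximum while preserving the upper bound, and confirming that the additive $|\tOUT|$ term from \autoref{thm:all} and the union-assembly cost are subsumed. I would also note briefly that the doubling trick used in \autoref{thm:ecr} to avoid knowing $|\tOUT|$ in advance carries over unchanged, since the guessing procedure can be run on the aggregate target output size.
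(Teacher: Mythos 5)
Your proposal is correct and follows essentially the same route as the paper: the paper likewise treats \textsf{PANDA} as a black box producing a constant number of acyclic pairs $(Q_i,\mD_i)$ with $|\mD_i| = |\mD|^{\subw(Q)}$ and $|Q_i(\mD_i)| \le |\tOUT|$, and then applies the main acyclic result (\autoref{thm:all}/\autoref{thm:ecr}) to each piece before unioning. Your added care about monotonicity of the exponent, the deduplication cost of the union, and the doubling trick only makes explicit what the paper leaves implicit.
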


\begin{example} \label{ex:cyclic}
    Consider the 4-cycle query $Q^\diamond(\bx_{123}) \leftarrow R_{12}(\bx_{12}) \wedge R_{23}(\bx_{23}) \wedge R_{34}(\bx_{34}) \wedge R_{14}(\bx_{14})$. For this query , $\subw(Q^\diamond(\bx_{123})) = 3/2$ and $\mathsf{PANDA}$ partitions $Q^\diamond(\bx_{123})$ into two queries, $Q^\diamond_1(\bx_{123}) \leftarrow S_{123}(\bx_{123}) \wedge S_{134}(\bx_{134})$ and $Q^\diamond_2(\bx_{123}) \leftarrow S_{124}(\bx_{124}) \wedge S_{234}(\bx_{234})$. It is easy to see that $\wout(Q^\diamond_1) = 1$ and $\wout(Q^\diamond_2) = 2$. Thus, the query can be evaluated in time $\tilde{O}(|\mD|^{3/2} \cdot |\tOUT|^{1/2})$. We note that~\cite{hu2024fast} requires $\tilde{O}(|\mD|^{3/2} \cdot |\tOUT|^{5/6})$ time for the 4-cycle query $Q^\diamond(\bx_{123})$.
\end{example}

\section{A Faster Algorithm for Path Queries} \label{sec:newpath}

\begin{algorithm}[!ht]
    \DontPrintSemicolon 
    \SetKwInOut{Input}{Input}
    \SetKwInOut{Output}{Output}
    \Input{Path query $P_k$, database instance $\mD$}
    \Output{$P_k(\mD)$}
    \SetKwFunction{subquery}{\textsc{processSubquery}}
    \SetKwFunction{push}{\textsc{push}}
    \SetKwProg{myproc}{\textsc{procedure}}{}{}
    \SetKwData{ret}{\textbf{return}}
    $(\tree, \chi) \leftarrow $ join tree for $P_k$; $\mD := (R^\mD_{\chi(s)})_{s \in V(\tree)} \leftarrow $ \textbf{apply} a full reducer for $\mD$ \\
    $K, K' \leftarrow $ empty stack;  $N \leftarrow |\mD|; \mJ_1, \mJ_2, \mJ_3, \mJ_4 \leftarrow \emptyset$ \;
    \ForEach{$i \in \{ \lfloor   {k}/{2} \rfloor, \dots, 1\}$}{
        $K.\push(\chi^{-1}(\bx_{i,i+1}))$ \;
    }
    \ForEach{$i \in \{ \lfloor \frac{k}{2} \rfloor + 1, \dots, k \}$}{
        $K'.\push(\chi^{-1}(\bx_{i,i+1}))$
    }
    $\mJ_1, (\tree',\chi'), \mD' \leftarrow$ Call \autoref{alg:our} with $K$ as bag  order, $(\tree, \chi), \mD$, root node as bag $\chi^{-1}(\bx_{k,k+1})$, and threshold $\Delta$ \label{line:jone}\;
    \tcc{$R^\mD_{1,\lfloor   {k}/{2} \rfloor + 1}$ is the relation assigned to the leaf node in $\tree'$}
    $\Delta' \leftarrow |R^\mD_{1,\lfloor   {k}/{2} \rfloor + 1}(\bx_{1, \lfloor   {k}/{2} \rfloor + 1})| \cdot \Delta / |\tOUT|$ \;
    \tcc{Partition $R^\mD_{1,\lfloor   {k}/{2} \rfloor + 1}$ into heavy and light subrelations}
    $R^{\mD,H}_{1, \lfloor   {k}/{2} \rfloor + 1} = \{\bv \in R^\mD_{1, \lfloor   {k}/{2} \rfloor + 1} \mid |\sigma_{ \bv[\bx_1]}(R^\mD_{1, \lfloor   {k}/{2} \rfloor + 1})| > \Delta' \}$ \;
    $\mD^H \leftarrow (\mD' \setminus R^\mD_{1, \lfloor   {k}/{2} \rfloor + 1}) \cup R^{\mD,H}_{1, \lfloor   {k}/{2} \rfloor + 1}$ \;
    $\mJ_2 \leftarrow $ Call \autoref{alg:yann} on $(\tree',\chi')$ with root node as $\chi'^{-1}(\bx_{k, k+1})$ and $\mD^H$ \label{line:jtwo} \;
    $R^{\mD,L}_{1, \lfloor   {k}/{2} \rfloor + 1} = R^\mD_{1, \lfloor   {k}/{2} \rfloor + 1} \setminus R^{\mD,H}_{1, \lfloor   {k}/{2} \rfloor + 1}$ \;
    $\mD^L \leftarrow (\mD' \setminus R_{1, \lfloor   {k}/{2} \rfloor + 1}) \cup R^{\mD,L}_{1, \lfloor   {k}/{2} \rfloor + 1}$ \;
    $\mJ_3, (\tree'',\chi''), \mD'' \leftarrow$ Call \autoref{alg:our} with $K'$ as bag order, $(\tree', \chi'), \mD^L$, root node as bag $\chi'^{-1}(\bx_{1,\lfloor   {k}/{2} \rfloor + 1})$, and threshold $\Delta$ \label{line:jthree}\;
    $\mJ_4 \leftarrow \pi_{\bx_{1, k+1}} (R^{\mD,L}_{1, \lfloor   {k}/{2} \rfloor + 1}(\bx_{1, \lfloor   {k}/{2} \rfloor + 1}) \wedge R^\mD_{\lfloor   {k}/{2} \rfloor + 1, k+1}(\bx_{\lfloor   {k}/{2} \rfloor + 1, k+1}))$ \label{line:jfour}\;
    \ret $\bigcup_{i \in [4]} \mJ_i$
    \caption{Improved Path Query Evaluation} \label{alg:pathnew}
 \end{algorithm}

In this section, we show a better algorithm that improves upon~\autoref{thm:path} \revthree{by invoking \autoref{alg:our} in a novel way}.
\autoref{alg:pathnew} shows the steps for evaluating a path query $P_k$. The main idea of the algorithm is to carefully choose the ordering of how the nodes in a join tree are processed. This is in contrast with \autoref{alg:our} and the Yannakakis algorithm where any leaf-to-root order is sufficient to get join time guarantees. \revthree{The key insight of the algorithm is the following: the reader may observe that \autoref{alg:our} only partitions a leaf node relation $T_{\chi(s)}$ over $\chi^\Join(s)$. However, one could also partition a relation over the non-join variables (i.e. the isolated free variables) to further speed up query evaluation. We demonstrate that such a strategy can indeed be faster.}

\eat{Before we present the result, we note that the improved result is only applicable to path queries and its generalization to arbitrary acyclic CQs is non-trivial due to the branching structure in an acyclic CQ, which creates technical challenges for bounding intermediate result sizes. Nevertheless, the result is an exciting development for reasons outlined at the end of the section.}

\begin{theoremrep}\label{thm:newpath}
    Given query $P_k$ and a database $\mD$, there exists an algorithm to evaluate $P_k(\mD)$ in time $O(|\mD|+|\mD| \cdot |\tOUT|^{1 - 1/ \lceil (k+1)/2 \rceil})$ for any $k \geq 1$.
\end{theoremrep}
\begin{proof}
    We analyze the running time of \autoref{alg:pathnew}. It is easy to see that other than \autoref{line:jone}, \autoref{line:jtwo}, \autoref{line:jthree}, and \autoref{line:jfour}, all other steps take at most $O(|\mD|)$ time.

    To compute $\mJ_1$, it takes $O(|\mD| \cdot |\tOUT|/\Delta + |\mD| \cdot \Delta^{\lfloor   {k}/{2} \rfloor - 1})$ since $K$ restricts the processing to only $\lfloor   {k}/{2} \rfloor$ relations. After $\mJ_1$ has been computed, the leaf bag of the join tree $\tree'$ will contain variables $\{1, \lfloor   {k}/{2} \rfloor + 1\}$ and the assigned relation will be $R_{1, \lfloor   {k}/{2} \rfloor + 1}$. The size of the relation is $O(|\mD| \cdot \Delta^{\lfloor   {k}/{2} \rfloor - 1})$.
    
    Next, we partition relation $R^\mD_{1, \lfloor   {k}/{2} \rfloor + 1}$ on variable $x_1$ with degree threshold $\Delta ' = 
    |R^\mD_{1,\lfloor   {k}/{2} \rfloor + 1}| \cdot \Delta / |\tOUT|
    =O(|\mD| \cdot \Delta^{\lfloor   {k}/{2} \rfloor} / |\tOUT|)$. After partitioning, the active domain of $x_1$ in heavy-partition $R^{\mD,H}_{1, \lfloor   {k}/{2} \rfloor + 1}$ is \\ $O(|R^{\mD,H}_{1, \lfloor   {k}/{2} \rfloor + 1}| / \Delta') = O(|\tOUT| / \Delta)$ and thus, calling Yannakakis algorithm on~\autoref{line:jtwo} takes time $O(|\mD| \cdot |R^{\mD,H}_{1, \lfloor   {k}/{2} \rfloor + 1}| / \Delta') = O(|\mD| \cdot |\tOUT| / \Delta)$.

    At this point, all values for variable $x_1$ in $R^{\mD,L}_{1, \lfloor   {k}/{2} \rfloor + 1}$ have degree at most $\Delta'$. The next call on~\autoref{line:jthree} requires time $O(|\mD| \cdot |\tOUT|/\Delta + |\mD| \cdot \Delta^{\lceil   {k}/{2} \rceil - 1})$. This is because except the root node, all other relations are of size $O(|\mD|)$ and thus, when \autoref{alg:our} invokes \autoref{lem:basic}, $R_{1, \lfloor   {k}/{2} \rfloor + 1}$ becomes a leaf node and  its size is not used in the running time expression (recall that \autoref{lem:basic} only uses the sum of sizes of internal nodes of the join tree in its time complexity analysis). Further, observe that by the choice of $K'$, when the algorithm terminates, the join tree $\tree''$ returned will contain $R_{\lfloor   {k}/{2} \rfloor + 1, k}$ as the leaf relation and $R_{1, \lfloor k /2 \rfloor + 1}$ as the root relation.

    Finally, \autoref{line:jfour} joins the two remaining relations in $\tree''$ and returns the output to the user. The key insight here is that since $x_1$ is light, the join time can be bound as $O(|\tOUT| \cdot \Delta') = O(|\mD| \cdot \Delta^{\lfloor   {k}/{2} \rfloor})$. Putting all things together, the total running time is big-O of
    $$ |\mD| \cdot |\tOUT|/\Delta + |\mD| \cdot \Delta^{\lceil   {k}/{2} \rceil - 1} + |\mD| \cdot \Delta^{\lfloor   {k}/{2} \rfloor}$$

    By using the fact that $\lceil   {k}/{2} \rceil - 1 \leq \lfloor   {k}/{2} \rfloor$ and minimizing the expression, we get the optimal value of $\Delta = |\tOUT|^{1 / (\lfloor   {k}/{2} \rfloor + 1)} = |\tOUT|^{1 / \lceil   (k+1)/{2} \rceil}$ since $\lceil   (k+1)/{2} \rceil = \lfloor   {k}/{2} \rfloor + 1$ for any positive integer $k$. Observe that for the optimal choice of $\Delta$, $1 \leq \Delta' = |\mD| / |\tOUT|^{1 / (1 + \lfloor k/2 \rfloor)} \leq |\mD|$ for all values of $k$ and $|\tOUT|$, and therefore is well-defined. Thus, the total running time of the algorithm is $O(|\mD| \cdot |\tOUT|^{1 - 1/ \lceil   (k+1)/{2} \rceil})$.
\end{proof}

\begin{example}
    Consider the $P_4(x_1, x_5) \leftarrow R_{12}(\bx_{12}) \wedge R_{23}(\bx_{23})\wedge R_{34}(\bx_{34}) \wedge R_{45}(\bx_{45})$. The algorithm sets $K = \{\chi^{-1}(\bx_{12}), \chi^{-1}(\bx_{23})\}$ and $K' = \{\chi^{-1}(\bx_{45}), \chi^{-1}(\bx_{34})\}$. First, \autoref{line:jone} computes the join output when variable $x_2$ is heavy and stores it in $\mJ_1$ and the returned join tree $\tree'$ contains $\bx_{13}$ as the leaf node bag with a materialized relation that corresponds to the join of $R^\mD_{12}(\bx_{12})$ and $R^\mD_{23}(\bx_{23})$ but when $x_2$ is light in $R_{12}$. In other words, we  get the relation $R^\mD_{13}$ of size $|\mD| \cdot \Delta$. Then, we partition $R^\mD_{13}$ into two sub-relations: $R^{\mD,H}_{13}$ and $R^{\mD,L}_{13}$ based on degree of $x_1$ with degree threshold $\Delta' = |\mD| \cdot \Delta^2 / |\tOUT|$. Once the partitioning is done, we compute $\mJ_2 = \pi_{x_1, x_5} (R^{\mD,H}_{13}(\bx_{13}) \wedge R^\mD_{34}(\bx_{34}) \wedge R^\mD_{45}(\bx_{45}))$ using Yannakakis algorithm. In the third step, we take the join tree of the subquery $\pi_{x_1, x_5} (R^{\mD,L}_{13}(\bx_{13}) \wedge R^\mD_{34}(\bx_{34}) \wedge R^\mD_{45}(\bx_{45}))$ rooted at bag $\bx_{13}$ and generate output $\mJ_3$ when $x_4$ is heavy in relation $R^\mD_{45}$. This step takes $O(|\mD| \cdot |\tOUT| / \Delta)$ time according to \autoref{lem:basic}. For all light $x_4$ values in $R^\mD_{45}$, \autoref{alg:our} will compute the join of $R^\mD_{34}(\bx_{34})$ and $R^\mD_{45}(\bx_{45})$, and store the materialized result in $R^\mD_{35}(\bx_{35})$ in time $O(|\mD| \cdot \Delta)$. The final step is to compute $\pi_{x_1, x_5}(R^{\mD,L}_{13}(\bx_{13}) \wedge R^\mD_{35}(\bx_{35}))$ which takes $O(|\tOUT| \cdot \Delta')$ time. Balancing all the costs, we obtain $\Delta = |\tOUT|^{1/3}$ and a total running time of $O(|\mD| \cdot |\tOUT|^{2/3})$.
\end{example}

\autoref{thm:newpath} has interesting implications in the evaluation of path queries. Observe that for $k=3$, we get the running time as $O(|\mD| + |\mD| \cdot \sqrt{|\tOUT|})$, matching the lower bound shown by Hu. Using the same ideas from~\autoref{sec:agg}, it is straightforward to adapt \autoref{alg:pathnew} to allow FAQs over path queries as well.
\section{Lower Bounds}

\looseness=-1
In this section, we demonstrate several lower bounds that show optimality for a subclass of CQs. \revthree{First, we define the $k$-clique problem that will be central to our lower bounds. Given an undirected graph $G = (V, E)$ and an integer $k \leq |V|$, the $k$-clique problem consists of deciding if the graph $G$ contains $k$ vertices such that each of the $k$ vertices are connected to each other via an edge in $E$. We also define the \emph{minimum-weight} $k$-clique problem: suppose the graph $G$ is equipped with an edge-weight function that maps edges to weights in $[0, M]$ for integer $M > 0$, find the $k$-clique where the edge sum is minimized.} We will use the following well-established conjectures from fine-grained complexity for the two $k$-clique problems.

\begin{definition}[Boolean $k$-Clique Conjecture]
There is no real $\epsilon >0$ such that computing the $k$-clique problem (with $k \geq 3$) over the Boolean semiring in an (undirected) $n$-node graph requires time $O(n^{k-\epsilon})$ using a combinatorial algorithm.
\end{definition}

\begin{definition}[Min-Weight $k$-Clique Conjecture]
 There is no real $\epsilon >0$ such that computing the $k$-clique problem (with $k \geq 3$) over the tropical semiring in an (undirected) $n$-node graph with integer edge weights can be done in time $O(n^{k-\epsilon})$.
\end{definition}

Our first lower bound tells us that the dependence in the bound of~\autoref{thm:cyclic} on both the submodular width and projection width is somewhat necessary.

\begin{theoremrep}\label{thm:sub:lowerbound}
Take any integer $\ell \geq 2$ and any rational $w \geq  1$ such that $\ell \cdot w$ is an integer. Then, there exists a query $Q$ with projection width \revthree{$\ell$ free variables} and submodular width $w$ such that no combinatorial algorithm can compute it over input $\mD$ in time $O(|\mD|^{w} \cdot |\tOUT|^{1-1/\ell - \epsilon})$ \revthree{ for any real $\epsilon > 0$}, assuming the Boolean $k$-Clique Conjecture.
\end{theoremrep}
\begin{proof}
Consider the graph $G = (V,E)$ for which we need to decide whether there exists a clique of size $k = (\ell-1)+\ell \cdot w$. Let $n = |V|$. It will be helpful to distinguish the $k$ variables of the clique into $\ell$ variables $x_1, \dots, x_\ell$ and the remaining $k-\ell$ variables $y_1, \dots, y_{k-\ell}$. Note that $k-\ell = \ell \cdot w -1$ is an integer $\geq 1$; hence, there exists at least one $y$-variable.

We will first compute all the cliques of size $\ell$ in $G$ and store them in a relation $C(v_1, \dots, v_\ell)$; this forms the input database $\mD$. Note that this step needs $O(n^\ell)$ time, and moreover the size of $C$ is $n^\ell$. Now, consider the following query:
$$ Q (x_1, \dots, x_\ell) \leftarrow U_1[x_1, y_1, \dots, y_{k-\ell}] \wedge U_2[x_2, y_1, \dots, y_{k-\ell}] \wedge \dots \wedge U_\ell[x_\ell, y_1, \dots, y_{k-\ell}]$$
where $U_i[x_i, y_1, \dots, y_{k-\ell}]$ stands for the join of following atoms:
$$ \{C(v_1, \dots, v_\ell) \mid v_1, \dots, v_\ell \text{ are all possible sets of size $\ell$ from } \{x_i, y_1, \dots, y_{k-\ell}\} \}$$
This is a well-defined expression, since $k-\ell = \ell \cdot w -1$ and $w \geq 1$. The submodular width of this query is $w$, as proved in~\cite{zhao2024evaluating}. It is also easy to see that \revthree{ the query has $\ell$ free variables}.

Suppose now we can compute $Q$ in time $O(|\mD|^{\subw} \cdot |\tOUT|^{1-1/\ell-\epsilon} )$ for some $\epsilon > 0$. Note that $|\mD| \leq n^\ell$ and $|\tOUT| \leq n^\ell$. Hence, this implies that we can compute $Q$ on the above input in time $O(n^{w \cdot \ell + \ell - 1 - \ell \epsilon})= O(n^{k-\epsilon'})$ \revthree{where $\epsilon' = \ell \epsilon$}. 

Once the output has been evaluated, for each tuple $t \in Q (x_1, \dots, x_\ell)$, we can verify in constant time that any two vertices $t(i)$ and $t(j)$ ($1 \leq i, j \leq \ell$) are connected via an edge. Thus, in $O(n^\ell)$ time, we can verify whether the vertices form an $\ell$-clique. Further, by our join query construction, each of the $\ell$ vertices is guaranteed to connect with a common vertex $v'$ (which corresponds to a valuation of the remaining variables $y_1, \dots, y_{k-\ell}$). 

All together, this obtains a (combinatorial) algorithm that computes whether a $k$-clique exists in time $O(n^{k-\epsilon'} + n^\ell) = \revthree{O(n^{k-\epsilon'})}$, contradicting the lower bound conjecture.
\end{proof}

We next prove a general result that gives output-sensitive lower bounds for arbitrary CQs. To this end, we utilize the notion of \emph{clique embedding}~\cite{FanKZ23}. We say that two sets of vertices $X,Y \subseteq V(H)$ {\em touch} in $\mH$ if either $X \cap Y \neq \emptyset$ or there is a hyperedge $e \in E(\mH)$ that intersects both $X$ and $Y$.

\begin{definition}[Clique Embedding]
Let $k \geq 3$ and $\mH$ be a hypergraph. A {\em $k$-clique embedding}, denoted as $C_k \mapsto \mH$, is a mapping $\psi$ that maps every $v \in \{1, \dots, k\}$ to a non-empty subset $\psi(v) \subseteq V(\mH)$ such that the following hold:
\begin{enumerate}
\item $\psi(v)$ induces a connected subgraph; 
\item for any two $u \neq v \in \{1, \dots, k\}$ then $\psi(u), \psi(v)$ {\em touch} in $\mH$.
\end{enumerate}
\end{definition}

It is often convenient to describe a clique embedding $\psi$ by the reverse mapping $\psi^{-1}(x) = \{i \mid x \in \psi(i) \}$, for $x \in V(\mH)$. 
For a hyperedge $J \in E(\mH)$, its {\em weak edge depth} is $d_{\psi}(J) := |\{v \mid  \psi(v) \cap J \neq \emptyset \}|$, i.e., the number of vertices from $V(\mH)$ that map to some variable in $J$. We also define the {\em edge depth} of $J\in E(\mH)$ as $d^+_{\psi}(J) := \sum_{v \in J} d_\psi(v)$, i.e. weak edge depth but counting multiplicity. 
For an embedding $\psi$ and a hyperedge $\mF \in E(\mH)$, the {\em $\mF$-weak edge depth of $\psi$} is defined as $\mathsf{wed}^\mF(\psi):= \max\limits_{J \in E(\mH) \setminus \mF} d_{\psi}(J)$, i.e. the maximum of weak edge depths excluding $\mF$.


\begin{theoremrep}\label{thm:boolean_lower_bound}
    Given any CQ $Q(\bx_\mF) \leftarrow \bigwedge_{J \in \mE} R_J(\mathbf{x}_J)$ and database $\mD$, 
    let $\psi$ be a $k$-clique embedding of the hypergraph $\mH' = ([n],E(\mH) \cup \{\mF\})$ such that $x\cdot \mathsf{wed}^\mF(\psi) + y \cdot d^+_{\psi}(\mF) \leq k$ for positive $x,y >0$.
    Then, there is no combinatorial algorithm with running time $O(|\mD|^{x-\epsilon} \cdot |\tOUT|^{y-\epsilon'})$ for evaluating $Q(\mD)$ assuming the Boolean $k$-Clique Conjecture, where $\epsilon, \epsilon'$ are non-negative with $\epsilon + \epsilon' > 0$.
\end{theoremrep}
\begin{proof}
    Construct the instance as Theorem 11 in~\cite{FanKZ23} and observe that there exists a $k$-clique if and only if there exists a tuple in the output that is consistent with the domains in variables $\mF$. By a consistent tuple, we mean a tuple whose values associated to the same partitions of the input graph for finding $k$-clique are the same. The time and size to construct such instance $\mD$ is bounded by $O(n^{\mathsf{wed}^\mF(\psi)})$, and observe the size of $|\tOUT|$ is bounded by  $O(n^{d^+_{\psi}(\mF)})$. Thus, such output-sensitive combinatorial algorithm will yield a combinatorial algorithm for $k$-clique that runs in time
    $O(n^{x\cdot (\mathsf{wed}^\mF(\psi)-\epsilon)} \cdot n^{y\cdot (d^+_{\psi}(\mF)-\epsilon')}) = O(n^{k-\epsilon''})$ for some $\epsilon'' >0$, contradicting the Boolean $k$-Clique Conjecture.
\end{proof}
We can show an analogous result for $\faq$ queries over the tropical semiring.

\begin{theoremrep}\label{thm:trop_lower_bound}
    Given any $\faq$ $\varphi(\bx_\mF)$ as~\eqref{eq:sumprod-rule} and database $\mD$, let $\psi$ be a $k$-clique embedding of the hypergraph $\mH' = ([n],E(\mH) \cup \{\mF\})$ such that $x\cdot \mathsf{wed}^\mF(\psi) + y \cdot d^+_{\psi}(\mF) \leq k$ for positive $x,y >0$. Then, there exists no algorithm that evaluates $\varphi(\mD)$ over the tropical semiring in time $O(|\mD|^{x-\epsilon} \cdot |\tOUT|^{y-\epsilon'})$ assuming the Min-Weight $k$-Clique Conjecture, where $\epsilon, \epsilon'$ are non-negative with $\epsilon + \epsilon' > 0$.
\end{theoremrep}
\begin{proof}
    Construct the instance as Theorem 11 in~\cite{FanKZ23}. Given the output tuples, we simply check for the consistency of domains in variables $\bx_\mF$ and return the smallest annotation among consistent tuples after sorting. Observe that this procedure correctly solves the Min-Weight $k$-clique problem. The time analysis is then identical as the proof of Theorem~\ref{thm:boolean_lower_bound}.
\end{proof}




\looseness=-1 Hu defined the notion of $\sfreew$ of any acyclic query and proved that any "semiring algorithm" requires $\Omega(|\mD|\cdot|\tOUT|^{1-\frac{1}{\sfreew(Q)}} + |\tOUT|)$ ~\cite{hu2024fast}. Our lower bound is complementary to Hu's lower bound. Specifically, Hu showed a stronger result that the lower bound applies for every value of $|\mD|$ and $|\tOUT|$, whereas our result shows the existence of a hard database instance. However, our lower bound also applies to cyclic CQs whereas Hu's $\sfreew(Q)$ is not defined for cyclic queries. Therefore, the two results are incomparable. For the special case of $x=1$, we match Hu's lower bound (see Proposition~\ref{prop:Hu}) by providing an alternate proof that there exists a clique embedding for $Q'$ such that $y = 1-\frac{1}{\sfreew(Q)}$. 
We now apply Theorem~\ref{thm:boolean_lower_bound} and Theorem~\ref{thm:trop_lower_bound} to get tight lower bounds for star queries and can be extended to \revthree{hierarchical queries (note that star queries are also hierarchical queries).}

\begin{toappendix}
\begin{proposition}\label{prop:Hu}
    Given any CQ $Q(\bx_\mF) \leftarrow \bigwedge_{J \in \mE} R_J(\mathbf{x}_J)$, define $Q'(\bx_\mF) \leftarrow \bigwedge_{J \in \mE} R_J(\bx_J) \wedge R(\bx_\mF)$. Then, there exists a $k$-clique embedding $\psi$ such that $\frac{k-\mathsf{wed}^\mF(\psi)}{d^+_{\psi}(\mF)} = 1 - \frac{1}{\sfreew(Q)}$.
\end{proposition}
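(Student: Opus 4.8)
The plan is to generalize the embedding implicit in the star-query lower bound, where each of the $\ell$ free "leaf" variables receives exactly one clique vertex and all of them touch through the head hyperedge. First I would reduce to the component that realizes the width: let $s = \sfreew(Q)$ and let $C$ be a component of $\textsf{decomp}(\textsf{red}[Q])$ with free-width $s$. Since every isolated variable of a reduced query sits in a \emph{unique} atom, a smallest hyperedge cover of the isolated variables of $C$ consists of exactly the $s$ atoms that contain an isolated variable; call them $e_1,\dots,e_s$ and fix one isolated (hence free) variable $z_i \in e_i$ for each $i$. The $z_i$ are then distinct free variables, and each $z_i$ lies in exactly one atom of $\textsf{red}[Q]$.

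Next I would build a "scaled star" embedding that is robust to small $s$. Pick an integer $c$ with $cs \ge 3$, set $k = cs$, partition $\{1,\dots,k\}$ into $s$ groups $G_1,\dots,G_s$ of size $c$, and define $\psi(v) = \{z_i\}$ for every $v \in G_i$. This is a valid $k$-clique embedding of $\mH' = ([n], E(\mH)\cup\{\mF\})$: each image is a singleton (hence induces a connected subgraph), two vertices in the same group share the vertex $z_i$, and two vertices in different groups touch through the added hyperedge $\mF$, since $z_i,z_j \in \mF$. For the depths, $\psi^{-1}(z_i) = G_i$ and $\psi^{-1}(x) = \emptyset$ for every other free variable, so $d^+_\psi(\mF) = \sum_{i=1}^{s} |G_i| = cs = k$. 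For any atom $J \neq \mF$ we have $d_\psi(J) = c \cdot |\{\, i : z_i \in J \,\}|$; granting that this is at most $c$ (and equal to $c$ for the atom carrying some $z_i$) gives $\mathsf{wed}^\mF(\psi) = c$, whence
\[
\frac{k - \mathsf{wed}^\mF(\psi)}{d^+_\psi(\mF)} = \frac{cs - c}{cs} = 1 - \frac{1}{s} = 1 - \frac{1}{\sfreew(Q)}.
\]

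The step I expect to be the main obstacle is justifying $\mathsf{wed}^\mF(\psi) = c$ when $Q$ is not already reduced, i.e. that \emph{no} atom of the original $\mH$ contains two distinct $z_i$. This is the only place the structure of the reduction enters, and it needs a short lemma about Algorithm~\ref{alg:measure}: free variables are never deleted, and an edge is discarded only when it is contained in another current edge. Hence an original atom holding both $z_i$ and $z_j$ would, by following the chain of absorbing edges, force a \emph{surviving} reduced atom to contain both $z_i$ and $z_j$, contradicting that $z_i$ and $z_j$ are isolated in distinct atoms of $\textsf{red}[Q]$. With $cs \ge 3$ the value $k = cs \ge 3$ is admissible, completing the construction; taking $c = 1$ recovers the clean star/$\wout=\ell$ case in which each free leaf variable gets a single clique vertex.
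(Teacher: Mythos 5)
Your construction is genuinely different from the paper's, and the difference matters. The paper takes $k=\sfreew(Q)+1$: one clique vertex on each distinguished isolated variable $z_i$ of the width-realizing component, plus one extra ``fresh'' vertex on the remaining (non-free) variables of that component, which yields $\mathsf{wed}^\mF(\psi)=2$ and $d^+_\psi(\mF)=\sfreew(Q)$. You instead scale the star, placing $c$ vertices on each $z_i$ and no fresh vertex; the arithmetic is fine and this even handles $\sfreew(Q)=1$, where the paper's $k=2$ is not a legal clique embedding. However, the step you flag as the main obstacle contains a genuine error: the claim that ``each $z_i$ lies in exactly one atom of $\textsf{red}[Q]$'' is false. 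Isolation, as used in the definition of free-width, is relative to a \emph{component} of $\textsf{decomp}(\textsf{red}[Q])$, and components are glued only through non-free variables, so a free variable isolated in its component can occur in atoms of other components. Concretely, for $Q(z_1,z_2,y)\leftarrow A(z_1,u)\wedge B(z_2,u)\wedge J(z_1,y)$ with $u$ non-free, the query is already reduced, the component $\{A,B\}$ realizes $\sfreew(Q)=2$ with distinguished variables $z_1,z_2$, yet $z_1$ occurs in both $A$ and $J$. Consequently your absorption-chain lemma proves the wrong thing: concluding that ``a surviving reduced atom contains both $z_i$ and $z_j$'' is not a contradiction with per-component isolation, so the case your embedding cannot tolerate is never excluded.

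And that case is fatal to your scaled design: if some atom $J\neq\mF$ of $\mH$ contains two distinguished variables, then $d_\psi(J)=2c$, so $\mathsf{wed}^\mF(\psi)\geq 2c$ and the ratio collapses (with atoms $A(z_1,u)$, $B(z_2,u)$, $C(z_1,z_2,w)$, $u,w$ non-free, your embedding gives ratio $0$ instead of $1/2$; note this hypergraph is cyclic, which is precisely why acyclicity must enter somewhere). For acyclic $Q$ the fact you need --- no atom of $\mH$ other than $\mF$ contains two distinct $z_i$'s --- is actually true, but proving it requires acyclicity, not just reducedness: taking a join tree of $\textsf{red}[Q]$, existential connectivity of the component forces the entire tree path between $e_i$ and $e_j$ to lie inside the component, and the running-intersection property applied to a hypothetical atom containing both $z_i$ and $z_j$ then forces $z_j\in e_i$ or $z_i\in e_j$, contradicting isolation \emph{within} the component. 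Your proof never invokes acyclicity at this point, so it cannot be complete as written. It is also worth noting that the paper's single-multiplicity design is more robust exactly here: an atom containing two distinguished variables still has weak edge depth $2$ under the paper's embedding, consistent with its claimed $\mathsf{wed}^\mF(\psi)=2$, whereas it destroys your bound of $c$; that tolerance is what the extra fresh vertex buys.
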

\begin{proof}
    It suffices to construct a clique embedding on the connected components that achieves $\sfreew(Q)$ after Hu's cleanse and decomposition steps~\cite{hu2024fast}. It is straightforward to see that one can assign each element in $S$ an isolated output variable~\cite{hu2024fast}. Consider the $(\sfreew(Q)+1)$-clique embedding $\psi$ where $\psi$ maps $\sfreew(Q)$ distinct colors to each one of those distinguished isolated output variables, and map an additional fresh color to all other variables (including all join variables and possibly some isolated output variables that are not assigned to $S$). Clearly, $\psi$ is a $(\sfreew(Q)+1)$-clique embedding for $Q'$. Moreover, $\mathsf{wed}^\mF(\psi) = 2$ and $d^+_{\psi}(\mF) = \sfreew(Q)$, and thus  $\frac{k-\mathsf{wed}^\mF(\psi)}{d^+_{\psi}(\mF)} = \frac{\sfreew(Q)+1 - 2}{\sfreew(Q)} = 1-\frac{1}{\sfreew(Q)}$, completing the proof. 
\end{proof}
\end{toappendix}

\begin{theoremrep}\label{thm:boolean}
For the star query $Q^\star_\ell$, there exists a database $\mD$ such that no algorithm can have runtime of $O(|\mD| \cdot |Q^\star_\ell(\mD)|^{1-1/\ell-\epsilon})$ \revthree{ for any real $\epsilon > 0$} subject to the Boolean $k$-clique conjecture.
\end{theoremrep}
\begin{proof}
The star query $Q^\star_\ell$ is defined as $Q^\star_\ell (x_1, \dots, x_\ell) \leftarrow R_1(x_1,y) \wedge \dots \wedge R_\ell(x_\ell, y)$. Consider the query $(Q^\star_\ell)' (x_1, \dots, x_\ell) \leftarrow R_1(x_1,y) \wedge \dots \wedge R_\ell(x_\ell, y) \wedge R(x_1, x_2, \dots, x_\ell)$. Observe that $\psi(i) = x_i$ for $1\leq i \leq \ell$ and $\psi(\ell+1) = y$ is a $(\ell+1)$-clique embedding of $(Q^\star_\ell)'$. This has $\mathsf{wed}^\mF(\psi) = 2$ and $d^+_{\psi}(\mF) = \ell $. For $x=1$ and $y = 1-1/\ell$, we have $x\cdot \mathsf{wed}^\mF(\psi) + y \cdot d^+_{\psi}(\mF) \leq \ell+1$.
 Now apply Theorem~\ref{thm:boolean_lower_bound}.
\end{proof}

\begin{restatable}{theorem}{startropicallb}\label{thm:tropical}
    For the star query $Q^\star_\ell$, there exists a database $\mD$ such that no algorithm can have runtime of $O(|\mD|\cdot |Q^\star_\ell(\mD)|^{1-1/\ell-\epsilon}) $ \revthree{for any real $\epsilon > 0$} subject to the Min-Weight $k$-Clique Conjecture.
\end{restatable}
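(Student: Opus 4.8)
The plan is to mirror the proof of \autoref{thm:boolean} almost verbatim, swapping the Boolean lower-bound engine for its tropical counterpart \autoref{thm:trop_lower_bound}. First I would fix the augmented query $(Q^\star_\ell)'(x_1,\dots,x_\ell) \leftarrow R_1(x_1,y) \wedge \dots \wedge R_\ell(x_\ell,y) \wedge R(x_1,\dots,x_\ell)$, now read as an $\faq$ over $\mathsf{Trop}^{+}$ with free variables $\mF = \{x_1,\dots,x_\ell\}$, and reuse the same mapping $\psi(i) = x_i$ for $1 \le i \le \ell$ together with $\psi(\ell+1) = y$. This is a legal $(\ell+1)$-clique embedding of $\mH' = ([n], E(\mH) \cup \{\mF\})$: each $\psi(v)$ is a single vertex and hence induces a connected subgraph, distinct $x$-images touch through the head atom $R$, and $y$ touches each $x_i$ through $R_i$.

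Next I would recompute the two exponents that drive \autoref{thm:trop_lower_bound}. Every original atom $R_i = \{x_i,y\}$ meets exactly the two images $x_i$ and $y$, so $d_\psi(R_i) = 2$, and since the head atom $\mF$ is excluded from the maximum, $\mathsf{wed}^\mF(\psi) = 2$. For the free atom, each $x_j$ lies in a single image, so $d^+_\psi(\mF) = \sum_{j=1}^\ell 1 = \ell$. Choosing $x = 1$ and $y = 1 - 1/\ell$, the budget inequality reads $x \cdot \mathsf{wed}^\mF(\psi) + y \cdot d^+_\psi(\mF) = 2 + (\ell-1) = \ell + 1 = k$, so the hypothesis of \autoref{thm:trop_lower_bound} is met with $k = \ell+1$ (which is $\ge 3$ precisely because $\ell \ge 2$, matching the requirement of the Min-Weight $k$-Clique Conjecture).

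Applying \autoref{thm:trop_lower_bound} then forbids any algorithm evaluating the tropical $\faq$ in $O(|\mD|^{1-\epsilon} \cdot |\tOUT|^{1 - 1/\ell - \epsilon'})$ under the Min-Weight $k$-Clique Conjecture; taking $\epsilon = 0$ and folding $\epsilon'$ into the $o(1)$ yields exactly the stated bound $O(|\mD| \cdot |Q^\star_\ell(\mD)|^{1 - 1/\ell - o(1)})$. The only point that is not pure bookkeeping is checking that solving the \emph{actual} star query $Q^\star_\ell$ (which does not contain the auxiliary atom $R$) still recovers the min-weight $k$-clique: one must confirm that scanning the output tuples, restricting to those whose $\bx_\mF$-values are consistent across the graph partitions, and returning the smallest tropical annotation among them is correct and costs only $O(|\tOUT| \log |\tOUT|)$ after sorting. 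This is precisely the content already packaged inside \autoref{thm:trop_lower_bound} via the construction of~\cite{FanKZ23}, so I expect this to be the main (but entirely routine) obstacle — there is no genuinely new combinatorial argument beyond the Boolean case.
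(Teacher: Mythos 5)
Your proposal is correct and matches the paper's intended argument exactly: the paper proves the Boolean case (\autoref{thm:boolean}) via the augmented query $(Q^\star_\ell)'$, the embedding $\psi(i)=x_i$, $\psi(\ell+1)=y$ with $\mathsf{wed}^\mF(\psi)=2$, $d^+_{\psi}(\mF)=\ell$, and the choice $x=1$, $y=1-1/\ell$, and obtains the tropical case by invoking \autoref{thm:trop_lower_bound} in place of \autoref{thm:boolean_lower_bound}, which is precisely your route, including the observation that the consistency-check-and-minimum step is already packaged inside \autoref{thm:trop_lower_bound}.
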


\eat{While star query evaluation is optimal as shown by our lower bounds, results in Section~\ref{sec:newpath} show that it is possible to further improve the query evaluation runtime for other subclasses of queries. We leave the problem of improving both upper and lower bounds for arbitrary CQ as an open problem.}




\eat{The following theorem gives conditional lower bounds for $\ell$-paths for any integer $\ell\geq 2$ assuming the Boolean $k$-clique conjecture. This can be seen as an output-sensitive version of the lower bound results in~\cite{FanKZ23}. At a high level, we ``close'' the $\ell$-path into a $\ell$-cycle by examining the join output. For the ease of readers, we don't introduce the whole machinery of clique embedding power in~\cite{FanKZ23} but provide a self-contained proof specialized to the case of $\ell$-path.

\austen{When plugged into $\ell = 4$, this gives $\Omega(|\mD| \cdot |\tOUT|^{1/2})$ lower bound for 3-path, which tightly matched Shaleen's new algorithm. When $\ell = 5$, this gives $\Omega(|\mD| \cdot |\tOUT|^{2/3})$ lower bound for 4-path, which is also tight due to the new discovery. However, when $\ell = 6$, i.e. 5-path, we don't have a matching upper bound yet.} 

\begin{theorem}
    For any $(\ell-1)$-path query $Q(x_1,x_\ell):- U_1(x_1,x_2), U_2(x_2,x_3), \dots, U_{\ell-1}(x_{\ell-1},x_\ell)$, there exists no combinatorial algorithm in time $O(|\mD|\cdot |\tOUT|^{1 - 1 / \lceil \ell / 2 \rceil - \epsilon})$ for any $\epsilon > 0$ assuming the Boolean $k$-clique conjecture.
\end{theorem}

\begin{proof}
We start with the case when $\ell$ is odd. Let $\lambda = (\ell+1)/2$. Consider the following mapping $\psi$ from $[\ell]$ to variables $x_i$'s, $1 \leq i \leq \ell$, uniquely defined by the following:
    \begin{equation}
    \begin{aligned}
     \psi^{-1}(x_1) & = \{ 1,2, \dots, \lambda-1\} \\ 
      \psi^{-1}(x_2) & = \{ 2, 3, \dots, \lambda\} \\ 
      & \dots \\
    \psi^{-1}(x_\ell) & = \{ 2\lambda-1, 1, \dots, \lambda-2\}
    \end{aligned}
\end{equation}

In other words, $\psi$ maps each $i \in [\ell]$ into a consecutive segment consisting of $\lambda-1$ vertices in the $\ell$-cycle, if we were to add the atom $U_{\ell} (x_\ell, x_1)$. Observe that for any $i,j \in [\ell]$, there exists $p \in [\ell]$ such that $\{i,j\} \in \psi^{-1}(x_p) \cap \psi^{-1}(x_{p+1})$. Also, observe that for any $p \in [\ell]$, we have $|\psi^{-1}(x_p) \cup \psi^{-1}(x_{p+1})| = \lambda = (\ell+1) /2$.

Without loss of generality, we can assume that the input graph for deciding $\ell$-clique is $\ell$-partite. Indeed, given any graph $G = (V,E)$ where $V = \{v_1, v_2, \dots, v_n\}$, consider the $\ell$-partite graph $G^\ell = (V^\ell,E^\ell)$ where $V^\ell = \{v_i^j \mid 1 \leq i \leq n, 1 \leq j \leq k\}$ and for any two vertices $v_i^j, v_p^q \in V^k$, $\{v_i^j, v_p^q\} \in E^k $ iff $\{v_i, v_p\} \in E$ and $j \neq q$. Then there is a one-to-one mapping from a $\ell$-clique in $G$ to a $\ell$-clique in $G^\ell$.

We now construct an instance for $Q(x_1, x_\ell)$. The domain of each variable $x_i$ is the set of vectors over $[n]^{|\psi^{-1}(x_i)|}$, where  $n$ is the number of input graph $G$. Let $S_e = \{ i\in [\ell] \mid \psi(i) \cap e \neq \emptyset \}$ where $e = (x_p, x_{p+1})$ for some $p\in [\ell-1]$. By our construction, we have $|S_e| \leq \lambda$. We now compute all cliques in graph $G$ between the partitions $V_i$ where $i \in S_e$; these cliques will be of size $|S_e|$ and can be computed in time $O(n^\lambda)$ by brute force. Finally, for every clique $\{a_i \in V_i \mid i \in S_e\}$, add the tuple $t:= \prod_{i \in S_e} \mathsf{Dom}(x_i)$ where the value for variable $x$ is $\langle a_i \mid i \in \psi^{-1}(x) \rangle$ into the atom corresponding to $e$. Observe that the size of the instance $|\mD|$ and output $|\tOUT|$ are both bounded by $O(n^\lambda)$.

When $\ell$ is odd, we have $\lceil \ell / 2 \rceil = (\ell + 1)/2 = \lambda$. Assume that we have a combinatorial algorithm running in time $O(|\mD|\cdot |\tOUT|^{1 - 1 / \lambda - \epsilon})$ for some $\epsilon >0$. Then, we can solve this instance in time $O(n^\lambda \cdot n^{\lambda\cdot(1-1/\lambda - \epsilon )}) = O(n^{\lambda\cdot (2-1/\lambda) - \epsilon'}) = O(n^{2\lambda-1-\epsilon'}) = O(n^{\ell-\epsilon'}).$ Finally, we  check if there exists a tuple in the output that has identical entries on $\psi^{-1}(x_1)$ and $\psi^{-1}(x_\ell)$ and this can be done in $|\tOUT| = O(n^\lambda)$ time. Observe that such a tuple exists if and only if there exists an $\ell$-clique in $G$. This contradicts the Boolean $k$-clique conjecture. 

The case when $\ell$ is even is done by changing the mapping $\psi$ to be from $[\ell-1]$ to variables $x_i$'s, $1 \leq i \leq \ell$, uniquely defined by the following:
\begin{align*}
 \psi^{-1}(x_1) & = \{ 1,2, \dots, \lambda-1\} \\ 
  \psi^{-1}(x_2) & = \{ 2, 3, \dots, \lambda\} \\ 
  & \dots \\
   \psi^{-1}(x_{\ell - 1}) & = \{ 2\lambda-2, 2\lambda-1, 1, \dots, \lambda-3\}  \\
\psi^{-1}(x_\ell) & = \psi^{-1}(x_{\ell - 1})
\end{align*}
The proof is now complete by following the same construction when $\ell$ is odd.
\end{proof}

\austen{This can generalize to any semiring. Also, this can generalize to $|\mD| \cdot |\tOUT|^{\textsf{clemb}(Q')-1}$ where $Q'$ is $Q$ union  an atom with $Q$'s free variables. See sketch in the appendix.}}
\section{Related Work}

Several prior works have investigated problems related to output sensitive evaluation. Deng et al.~\cite{deng2023join} presented a dynamic index structure for output sensitive join sampling. Riko and St{\"o}ckel~\cite{jacob2015fast} studied output-sensitive matrix multiplication (which is the two path query), a result that built upon the join-project query evaluation results from~\cite{amossen2009faster}. Deng et al.~\cite{deng2018overlap} presented output-sensitive evaluation algorithms for set similarity, an important practical class of queries used routinely in recommender systems, graph analytics, etc. An alternate way to evaluate join queries is to use {delay} based algorithms. \eat{In particular, given an algorithm with preprocessing time $T_p$ and allows for $\delta$ delay enumeration  of the query result, one can obtain an evaluation algorithm that takes time $O(T_p + \delta \cdot |\tOUT|)$ time.} As we saw in Section~\ref{sec:agg}, our result is able to match the join running time obtained via~\cite{kara2020trade} for star queries. Recent work~\cite{agarwal2024reporting} has also studied the problem of reporting $t$ patterns in graphs when the input is temporal, i.e., the input is changing over time. Output-sensitive algorithms has also been developed for the problem of maximal clique enumeration~\cite{chang2013fast, makino2004new,conte2016sublinear}. Several interesting results have been known for listing $k$-cliques and $k$-cycles in output-sensitive way.~\cite{bjorklund2014listing} designed output-sensitive algorithms
for listing $t$ triangles, the simplest $k$-clique, using fast matrix multiplication.The work of~\cite{jin2023removing} and~\cite{abboud2022listing} show algorithms for list $t$ 4-cycles, and~\cite{jin2024listing} shows how to list $t$ 6-cycles. The upper bounds were shown to be tight under the 3SUM hypothesis.
\section{Conclusion}
In this paper, we presented a novel generalization of Yannakakis algorithm that is provably faster for a large class of CQs. We show several applications of the generalized algorithm (which is combinatorial in nature) to recover state-of-the-art results known in existing literature, as well as new results for popular queries such as star queries and path queries. \eat{At the heart of our technical development is a novel lemma that allows for a tighter bound on the intermediate result size as we process the join tree in a bottom-up fashion. }Surprisingly, our results show that for several queries, our combinatorial algorithm is better than known results that use fast matrix multiplication. We complement our upper bounds with a matching lower bound for a \revthree{subclass of cyclic and acyclic CQs}, for both Boolean semirings (a.k.a joins) as well as aggregations.

\bibliographystyle{ACMReferenceFormat}
\bibliography{ref}

\end{document}